\documentclass[12pt]{amsart}
\usepackage[utf8]{inputenc}
\usepackage[english]{babel}
\usepackage{multicol}
\usepackage[english]{babel}
\usepackage{amsmath, amssymb, amsthm, amscd,color,comment}
\usepackage{cancel}
\usepackage{cite}
\usepackage{alltt}
\usepackage[dvipsnames]{xcolor}
\usepackage{array}
\usepackage[small,bf,labelsep=period]{caption}
\usepackage{mathtools}
\usepackage{hyperref}
\usepackage{enumerate}

\usepackage{pgfplots}
\pgfplotsset{compat=1.15}
\usepackage{mathrsfs}
\usetikzlibrary{arrows}
\usepackage{hyperref}

\usepackage{enumitem}

\usepackage{algorithm}
\usepackage{algpseudocode}

\usepackage{enumitem}

\usepackage{enumerate}
\newtheorem{theorem}{Theorem}[section]

\newtheorem{example}[theorem]{Example}

\newtheorem{lemma}[theorem]{Lemma}

\newtheorem{proposition}[theorem]{Proposition}

\begin{document}
\title{Permutation Decoding of AG Codes from Curves Defined by Separated Polynomials}

\thanks{{\bf Keywords}: Permutation Decoding, AG Codes, Curves from Separated Polynomials.}

\thanks{{\bf Mathematics Subject Classification (2020)}: 94B05, 11T71, 14G50}

\author{Alonso S. Castellanos, Guilherme Tizziotti, and Wilson Olaya-Le\'on}

\address{Instituto de Matem\'atica e Estat\'istica, Universidade Federal de Uberl\^andia, Campus Santa M\^onica, CEP 38400-902, Uberl\^andia, Brazil}
\email{alonso.castellanos@ufu.br}

\address{Instituto de Matem\'atica e Estat\'istica, Universidade Federal de Uberl\^andia, Campus Santa M\^onica, CEP 38400-902, Uberl\^andia, Brazil}
\email{guilhermect@ufu.br}

\address{Escuela de Matem\'aticas, Universidad Industrial de Santander, Carrera 27 Calle 9, AA 678, Bucaramanga, Santander, Colombia}
\email{wolaya@uis.edu.co}

\begin{abstract}
   In this work, we investigate permutation decoding for algebraic geometry (AG) codes  arising from algebraic curves defined by separated polynomials. Using automorphisms of the underlying curves, we construct permutation automorphisms of the associated algebraic geometry codes and exploit the resulting orbit structure to determine information and check positions. We introduce a class of curves, called SAP curves (Separated Additive Polynomial curves), and investigate one-point AG codes defined on them. For these codes, we obtain permutation decoding sets that correct burst errors supported on coordinates associated with rational points sharing a common coordinate. We further identify a subclass of special SAP curves, including Hermitian curves, generalized Hermitian curves, and certain maximal curves, for which additional automorphisms yield more powerful decoding sets.
\end{abstract}

\maketitle

\section*{Introduction}

Coding theory plays a fundamental role in modern digital communication and data storage systems. Its main purpose is to provide reliable transmission of information through noisy channels by introducing redundancy into the transmitted data \cite{Mac1977}. Among the various classes of error-correcting codes, linear codes over finite fields constitute one of the most important and widely studied families due to their rich algebraic structure and efficient decoding algorithms \cite{HP2003}.

Let $\mathbb{F}_q$ denote the finite field with $q$ elements, where $q$ is a prime power. An $[n,k,d]_q$ linear code $C$ is a $k$-dimensional vector subspace of $\mathbb{F}_q^n$ with minimum Hamming distance $d$. The parameters $n$, $k$, and $d$ correspond respectively to the code length, dimension, and error-correcting capability. Linear codes over finite fields have numerous applications in communication systems, satellite transmission, distributed storage, cryptography, and deep-space communication \cite{HP2003}.

One of the central problems in coding theory is the decoding process, namely, determining the original transmitted codeword from a received vector possibly affected by errors. Although maximum likelihood decoding is optimal, it is computationally difficult in general. Consequently, several practical decoding techniques have been developed for specific families of codes \cite{R1992}.

Permutation decoding is an efficient algebraic decoding method introduced by MacWilliams \cite{Mac1964} for linear codes possessing a sufficiently rich automorphism group. The method exploits permutations from the automorphism group of the code to move error positions outside a fixed information set. More precisely, if $C$ is a linear code with automorphism group $\mathrm{Aut}(C)$, one searches for a set of permutations, called a PD-set (permutation decoding set), such that every correctable error pattern can be displaced from the information positions by at least one permutation in the set \cite{HP2003}.

The main advantage of permutation decoding lies in its ability to simplify the decoding procedure by combining syndrome decoding with the symmetry properties of the code. This approach has been successfully applied to several important families of codes, including cyclic codes, Reed--Muller codes, Hadamard codes, and algebraic geometry codes. Furthermore, the construction of small PD-sets remains an active research topic due to its direct impact on decoding efficiency and computational complexity \cite{G2006}.

Hence, codes possessing large automorphism groups are particularly well suited for this decoding technique. This observation naturally motivates the study of AG codes arising from curves over finite fields defined by separated polynomials, since many such curves are known to admit rich and highly structured automorphism groups. In \cite{LLMS2026}, Lichtenwalner, López, Matthews and Seneviratne investigated permutation decoding for AG codes arising from Hermitian and Norma-trace curves.

In this paper, we consider permutation decoding of AG codes on curves defined by separated polynomials, generalizing the results presented in \cite{LLMS2026}.

This paper is organized as follows. Section 1 presents the concepts, notation, and results on permutation decoding and AG codes that will be used throughout the paper. Section 2 introduces certain curves defined by separated polynomials, which are employed in the subsequent sections; it also includes a particular class of these curves whose automorphisms exhibit certain useful properties. Section 3 describes the information position set for certain one-point AG codes. Finally, Section 4 demonstrates how permutation decoding can correct burst errors in one-point AG codes on the curves introduced in the previous section.

\section{Preliminaries}

In this section, we present the basic concepts, notation and results that will be used throughout this paper. Let $\mathbb{F}_q$ denote the finite field with $q$ elements, where $q$ is a power of a prime $p$, and $\mathbb{F}_{q}^{*} = \mathbb{F}_q \setminus \{0\}$. The set $\{1, \dots, n\}$ is denoted by $[n]$, and for any set $A$, its cardinality is denoted by $\#A$.

\subsection{Permutation decoding}

Let $C \subseteq \mathbb{F}_q^n$ be a, $[n,k,d]_q$ linear code and let $\sigma \in S_n$ be a permutation. The action of $\sigma$ on a codeword $c \in C$ is given by
\[
\sigma(c) = \big(c_{\sigma^{-1}(1)}, \ldots, c_{\sigma^{-1}(n)}\big).
\]

The permutation automorphism group of $C$ is defined as
\[
\mathrm{Aut}(C) := \{\sigma \in S_n \mid \sigma(C) = C\}.
\]
Throughout this work, the term automorphism will always refer to a permutation automorphism, since monomial transformations are not considered.

The \textit{Information positions} are the coordinates of a codeword that contain the symbols corresponding to the original message. The set of indices associated with these coordinates is called an \textit{information set}, while the indices in its complement are referred to as \textit{check positions}. Thus, for a linear $[n,k,d]_q$ code $C$, exactly $k$ coordinates of a codeword $c \in C \subseteq \mathbb{F}_q^n$ determine the original message $c' \in \mathbb{F}_q^k$. The indices of these coordinates are called the \textit{information positions}.
Let $G$ be a generator matrix of $C$. For any subset $I \subseteq [n]$, let $G_I$ denote the submatrix of $G$ consisting of the columns indexed by $I$. A subset $I \subseteq [n]$ with $|I|=k$ is called an \textit{information set} of $C$ if $G_I$ is invertible.
So, if the generator matrix $G$ of $C$ is in systematic form,
$
G = [I_k \mid A],
$
then $I=[k]$ is an information set of $C$.

Note that, if the generator matrix  of $C$ has systematic form
$
G = [I_k \mid A],
$
then $I = [k]$ is the information set of $C$. Since permutation decoding depends on the chosen representation of the code, we fix this systematic form for convenience. The arguments below remain valid for any generator matrix in standard form $G = [I_k \mid A]$ and associated parity-check matrix $H = [-A^T \mid I_{n-k}]$, where $A^T$ is the transpose of the matrix $A$.

A subset $S \subseteq \mathrm{Aut}(C)$ is called an \textit{$r$-PD set} if, for every subset of coordinate positions $\{i_1,\ldots,i_r\} \subseteq [n]$, there exists a permutation $\sigma \in S$ such that
\[
\sigma(\{i_1,\ldots,i_r\}) \subseteq \{k+1,\ldots,n\}.
\]
If $r = t := \left\lfloor \frac{d-1}{2} \right\rfloor$, then $S$ is simply called a \textit{PD set}. When $r < t$, we refer to $S$ as a \textit{partial PD set}.

\begin{lemma}\cite[Theorem  8.1]{H1998}
Let $C$ be an $[n,k,d]_q$ linear code with information set $I=[k]$ and parity check matrix $H = [-A^T \mid I_{n-k}]$. For any received word $y \in \mathbb{F}_q^n$, the first $k$ coordinates of $y$, denoted by $y_{[k]}$ have no errors if and only if
\[
\mathrm{wt}(Hy^T) \le t,
\quad \text{where } t = \left\lfloor \frac{d-1}{2}\right\rfloor,\] and $\mathrm{wt}(\cdot)$ denotes the Hamming weight, i.e., the number of nonzero coordinates.

\end{lemma}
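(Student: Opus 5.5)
The plan is the classical syndrome argument. I will make explicit the hypothesis implicit in the statement: the received word is $y = c + e$ with $c \in C$ and error vector $e$ satisfying $\mathrm{wt}(e) \le t$. Without this the equivalence is false, so it must be assumed as in \cite{H1998}.

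\emph{Setup.} Since $H$ is a parity-check matrix of $C$, we have $Hc^T = 0$, so $Hy^T = He^T$. Write $e = (e_1, e_2)$ with $e_1 \in \mathbb{F}_q^k$ and $e_2 \in \mathbb{F}_q^{n-k}$. Because $H = [-A^T \mid I_{n-k}]$, the syndrome is
\[
s := Hy^T = -A^T e_1^T + e_2^T .
\]

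\emph{Forward direction.} Suppose $y_{[k]}$ has no errors, that is, $e_1 = 0$. Then $s = e_2^T$. Hence $\mathrm{wt}(s) = \mathrm{wt}(e_2) = \mathrm{wt}(e) \le t$.

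\emph{Converse.} Suppose $\mathrm{wt}(s) \le t$, and set $e' := (0, s^T) \in \mathbb{F}_q^n$.
\begin{enumerate}
\item We have $H e'^T = I_{n-k} s = s = Hy^T$, so $c' := y - e'$ lies in $C$.
\item Then $c - c' = e' - e$.
\item This gives $\mathrm{wt}(c - c') \le \mathrm{wt}(e') + \mathrm{wt}(e) \le 2t \le d - 1 < d$.
\item Since $c - c' \in C$, minimum distance forces $c = c'$, hence $e = e'$.
\end{enumerate}
Consequently $e_1 = 0$, so the first $k$ coordinates of $y$ are error-free. As a bonus, $e = (0, s^T)$, which is exactly what permutation decoding exploits.

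The only delicate point is the converse: constructing the candidate error $(0, s^T)$ and using the bound $2t < d$ to force uniqueness. Everything else is direct matrix computation with the systematic form of $H$.
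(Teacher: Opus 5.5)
Your proof is correct. The paper gives no proof of its own and only cites Huffman, and yours is the standard syndrome argument behind that result: the codeword $c'-c=e-e'=(e_1,e_1A)=e_1G$ you construct is the one ruled out by $2t<d$ in the usual proof. You are also right to make the hypothesis $y=c+e$ with $c\in C$ and $\mathrm{wt}(e)\le t$ explicit. The paper's restatement (``for any received word $y$'') omits it, and without it both directions of the equivalence fail.
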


Using the previous Lemma we get a permutation decoding procedure as follows. Given a PD set $S = \{\sigma_1, \ldots, \sigma_s\}  \subseteq \mathrm{Aut}(C)$ and a received word $y \in \mathbb{F}_q^n$, find $i \in [s]$ such that $\mathrm{wt}\big(H \cdot \sigma_i(y)^T\big) \leq \lfloor \frac{d-1}{2} \rfloor$. Then, decode $y$ as $\sigma_i^{-1}\big((\sigma_i(y))_{[k]}  \cdot G\big)$.

In practical terms, the first step is carried out by testing the above condition for each permutation $\sigma_i$ until one satisfying the condition is found. Once such a permutation has been identified, the first $k$ coordinates of $\sigma_i(y)$ are free of errors. Therefore, any received word $y \in \mathbb{F}_q^n$ with at most $t$ errors can be decoded by computing $\sigma_i^{-1}\big((\sigma_i(y))_{[k]} \cdot G\big).$


\subsection{Algebraic Geometry codes (AG codes)} \label{AG codes section}

Let $\mathcal{X}$ be a projective, non‑singular, geometrically irreducible algebraic curve of genus $g > 0$ over $\mathbb{F}_q$; throughout the paper, we refer to such an object simply as a \textit{curve}.  We denote the field of rational functions on $\mathcal{X}$ by $\mathbb{F}_q(\mathcal{X})$. The set of $\mathbb{F}_q$-rational points of $\mathcal{X}$ is denoted by $\mathcal{X}(\mathbb{F}_q)$. For a given $a, b \in \mathbb{F}_q$, we denote by $\mathcal{P}_a$ the set of all $\mathbb{F}_q$-rational points whose first coordinate equals $a$, and by $\mathcal{Q}_b$ the set of all $\mathbb{F}_q$-rational points whose second coordinate equals $b$; that is,  
\begin{equation} \label{P_Q}
\mathcal{P}_a := \{P = (x,y) \in \mathcal{X}(\mathbb{F}_q) \mid x = a\},  \mbox{ and }
\mathcal{Q}_b := \{P = (x,y) \in \mathcal{X}(\mathbb{F}_q) \mid y = b\}.
\end{equation}
Let $P_1, \ldots, P_n, Q_1, \ldots, Q_\ell$ be $n+\ell$ distinct $\mathbb{F}_q$-rational points on $\mathcal{X}$ and let $m_1, \ldots, m_\ell$ be integers. Consider the divisors $D = P_1 + \cdots + P_n$ and $G = m_1Q_1 + \cdots + m_\ell Q_\ell$. The \textit{Algebraic Geometry code (AG code)} $C(D,G)$ arising from the curve $\mathcal{X}$ is defined as
\[
C(D,G) := \{ (f(P_1), \ldots, f(P_n)) \in \mathbb{F}_q^n \;:\; f \in \mathcal{L}(G) \}, 
\]
where $\mathcal{L}(G)=\{f\in \mathbb{F}_q(\mathcal{X}): (f) \geq -G\}\cup\{0\}$ denotes the Riemann-Roch space associated to divisor $G$.
Recall that $C(D,G)$ is an $[n,\ell(G)-\ell(G-D), \geq n-\mathrm{deg}(G)]$ code where $\ell(G)=\mathrm{dim}_{\mathbb{F}_q}(\mathcal{L}(G))$.
If $G = m_1Q_1$ the AG code $C(D,m_1Q_1)$ is called \emph{one-point AG code}. For more details about AG codes, see e.g.\ \cite{HLP1998, S1993}.

Let $\mathrm{Supp}(D) = \{P_1,\ldots,P_n\}$ be the support of the divisor $D$. Since $\# \mathrm{Supp}(D)=n$, we have that the permutation group $\mathcal{P}(\mathrm{Supp}(D))$ on $\mathrm{Supp}(D)$ is isomorphic to the symmetric group $S_n$, and each $\sigma \in \mathcal{P}(\mathrm{Supp}(D))$ induces an $\mathbb{F}_q$-linear mapping $\tilde{\sigma}$ of the code $C(D,G)$ to $\mathbb{F}_q^n$ by setting
\[
\tilde{\sigma}(f(P_1),\ldots,f(P_n)) := (f(\sigma(P_1)),\ldots,f(\sigma(P_n))).
\]

The mapping $\tilde{\sigma}$ is an automorphism of the code $C(D,G)$ if $\tilde{\sigma}(C(D,G)) = C(D,G)$.

\medskip

In \cite{G1982}, Goppa already observed that the underlying algebraic curve induces automorphisms of the associated AG codes as follows.

\medskip

Let $\mathrm{Aut}(\mathcal{X}) = \{\sigma : \mathcal{X} \to \mathcal{X} : \sigma \text{ is birational}\}$ be the automorphism group of $\mathcal{X}$ over $\mathbb{F}_q$ and, for two divisors $D$ and $G$, consider the subgroup
$$
\mathrm{Aut}_{D,G}(\mathcal{X}):= \{ \sigma \in \mathrm{Aut}(\mathcal{X}) : \sigma(D)=D \text{ and } \sigma(G)=G \}.
$$

Thus, an automorphism in $\mathrm{Aut}_{D,G}(\mathcal{X})$ gives rise to a permutation automorphism of the code $C(D,G)$ as follows. Given $\sigma \in \mathrm{Aut}_{D,G}(\mathcal{X})$, we define $\bar{\sigma} \in S_n$ by
\[
\bar{\sigma} : [n] \to [n], \quad i \mapsto j
\]
if and only if $\sigma(P_i) = P_j$. If $\deg D > 2g - 2$, the map
\[
\varphi : \mathrm{Aut}_{D,G}(\mathcal{X}) \longrightarrow \mathrm{Aut}(C(D,G)), \quad \sigma \mapsto \bar{\sigma}
\]
is injective \cite[Proposition 8.2.3]{S1993}, and the identification $\sigma \leftrightarrow \bar{\sigma}$ yields
\begin{equation}\label{eq:aut_inclusion}
\mathrm{Aut}_{D,G}(\mathcal{X}) \le \mathrm{Aut}(C(D,G)).
\end{equation}

On the other hand, we get a structure of $\mathbb{F}_q[t]$-module on $C(D,G)$ in the following way. 
The set $\mathrm{Supp}(D) = \{P_1, \ldots, P_n\}$ decomposes under the action of 
$\sigma$ into $r$ distinct orbits, $O_1, \ldots, O_r$. For $i = 1, \ldots, r$, fix a 
point $P_{i,0} \in O_i$ and write $P_{i,j} = \sigma^j(P_{i,0}) \in O_i$, 
$j = 0, 1, \ldots, \# O_i - 1$. 
Associated to a codeword $(f(P_1), \ldots, f(P_n)) \in C(D,G)$, $f \in L(G)$, we consider 
the polynomials
\[
h_i(t) = \sum_{j=0}^{|O_i|-1} f(P_{i,j}) t^j, \qquad i = 1, \ldots, r.
\]
In this way we may represent a codeword as an $r$-tuple 
$(h_1(t), \ldots, h_r(t)) \in \mathbb{F}_q[t]^r$, which can be seen also as an 
element of the $\mathbb{F}_q[t]$-module 
$A = \bigoplus_{i=1}^r \mathbb{F}_q[t] / \langle t^{\# O_i} - 1\rangle$. 
The collection $\overline{C}$ of $r$-tuples obtained from all $f \in L(G)$ is 
closed under sum and multiplication by $t$. Define 
$\widetilde{C} = \pi^{-1}(\overline{C})$, where $\pi$ is the natural projection 
from $\mathbb{F}_q[t]^r$ onto $\bigoplus_{i=1}^r \mathbb{F}_q[t] / \langle t^{\# O_i} - 1\rangle$. 
Thus, $C(D,G)$ can be identified to the submodule $\widetilde{C} \subseteq \mathbb{F}_q[t]^r$.

In \cite{HLS1995}, Heegard, Little and Saints presented a systematic encoding for a class of AG codes using Gr\"{o}bner basis for modules, in which certain monomials in $\mathbb{F}_q[t]^r$ are associated with the information positions of the $r$-tuples $(h_1(t), \ldots, h_r(t))$. Subsequently, in \cite{LSH1997}, the same authors introduced an object called a \textit{root diagram}, which provides a Gr\"{o}bner basis to one-point AG codes. The empty boxes in the root diagram determine the dimension of $C(D,m_1Q_1)$ on the curve $\mathcal{X}$ (see \cite[Proposition 2.3]{LSH1997}) and are closely related to the construction of systematic generator matrices and permutation decoding for
one-point Hermitian codes (see \cite[Proposition 3.2 and Theorem 3.5]{LLMS2026}. In the following sections, by exploring the results presented in \cite{FT2018}, we extend the contributions of \cite{LLMS2026} to one-point AG codes arising from certain curves defined by equations with separated variables.


\section{Curves given by Separated Polynomials with special automorphism}

Curves over finite fields defined by separated polynomials constitute a fundamental class of algebraic curves with rich arithmetic and geometric structure. These curves include important examples such as Hermitian and norm-trace curves, which play a central role in the construction of algebraic geometry codes and their automorphism groups. In particular, such curves often exhibit large symmetry groups of automorphism and well-structured sets of rational points, properties that can be effectively exploited in applications to coding theory, see e.g. \cite{BMZ2020, HKT2008, ST2018}.

Let $\mathcal{X}^*$ be the curve defined over $\mathbb{F}_q$ by the affine equation
\[
f(y) = g(x)
\]
satisfying the following conditions:

\begin{enumerate}
\item $f(t), g(t) \in \mathbb{F}_q[t]$, where $\deg(f) = \tau_1$ and $\deg(g) = \tau_2$, with $\gcd(\tau_1,\tau_2)=1$, and $f(t)$ is an additive polynomial over $\mathbb{F}_q$, that is, $ f(t+a)=f(t)+f(a)  \text{ for all } a\in \mathbb{F}_q$;

\item there exists a point $P^*\in\mathcal{X}^*(\mathbb{F}_q)$ such that
\[
\operatorname{div}_\infty(x) = \tau_1P^*, \quad \operatorname{div}_\infty(y) = \tau_2P^*, \quad \text{and} \quad H(P^*) = \langle \tau_1,\tau_2 \rangle,
\]
where $H(P^*)$ is the Weierstrass semigroup at $P^*$, and $\operatorname{div}_\infty(\cdot)$ denote the pole divisor;
\item there exists a divisor $D$ such that $P^* \notin \mathrm{Supp} (D)$ and $\sigma \in \mathrm{Aut}_{D,\lambda P^*}(\mathcal{X^*})$, for some positive integer $\lambda$, given by
$$
\sigma:\begin{cases}
x \mapsto \alpha x \\
y \mapsto  \alpha^{t} y
\end{cases}
$$
for some positive integer $t$ and some $\alpha \in \mathbb{F}_{q}^{*}$.
\end{enumerate}

Note that, by definition of $\sigma$, we have $\mathrm{ord}(\sigma)=\mathrm{ord}(\alpha)$, hence $\mathrm{ord}(\sigma) \mid (q-1)$. 
Since $\mathcal{X}^*$ is defined by separated polynomials $f$ and $g$, with $f$ additive, and admits an automorphism that fixes an $\mathbb{F}_q$-rational point such that its Weierstrass semigroup is generated by the degrees of $f$ and $g$, from now on we will refer to a curve with these characteristics as \textit{SAP curve}, Separable Additive Polynomial curve.

Examples of SAP curves include several important curves studied in coding theory: 
\begin{itemize}[leftmargin=*, labelsep=0.5em]
\item The Hermitian curves $\mathcal{H}_q$, defined over $\mathbb{F}_{q^2}$ by the affine equation $y^q + y = x^{q+1}$.
\item The Norm-trace curves defined over $\mathbb{F}_{q^{s}}$ by $y^{q^{s-1}} + \cdots + y^q + y = x^{\frac{q^s-1}{q-1}}$.
\item The subcovers of the Hermitian curve $\mathcal{X}_{q,m}$ be the curve defined over $\mathbb{F}_{q^2}$ by the affine equation
$y^q + y = x^m$, where $q$ is a prime power and $m > 2$ is a divisor of $q +1$, see \cite{GV1986}.
\item The generalized Hermitian curves $\mathcal{X}_{q,r}$ defined over $\mathbb{F}_{q^{2r}}$ by $y^q + y = x^{q^r+1}$, where $r$ is an odd integer, see \cite{S1973,G1992}. 
\item The Abdón-Torres curves $\mathcal{Y}_{2}$ is a maximal curve defined over $\mathbb{F}_{q^{2}}$ by the affine equation
$y^{{q}/{2}} + y^{{q}/{2^2}} + \cdots + y^2 + y = x^{q+1}$, where $q\geq 4$ be a power of $2$, see \cite{AT1999}. 
\end{itemize}

In the next result we describe a structure of some rational points on a SAP curve. 

\begin{lemma}\label{tal}
Let  $B=\{\beta\in\mathbb{F}_q: f(\beta)=0\}
$ and let $\mathcal{P}_a$ be as in (\ref{P_Q}). If $(a,b)\in \mathcal{X}^*(\mathbb{F}_q)$, then
\begin{enumerate}
    \item for all $\beta \in B$,  $(a,b+\beta)\in \mathcal{X}^*(\mathbb{F}_q)$.
    \item $\mathcal{P}_a=\{(a,b+\beta): \beta\in B\}$. Consequently, $\#\mathcal{P}_a= \# B$.
\end{enumerate} 
\end{lemma}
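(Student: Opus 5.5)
The plan is to argue directly from additivity of $f$, working with affine rational points. One caveat up front: since $P^*$ is the unique pole of $x$, every point of $\mathcal{P}_a$ is an affine point $(a,y)$ satisfying $f(y)=g(a)$.

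For item (1), fix $(a,b)\in\mathcal{X}^*(\mathbb{F}_q)$, so $f(b)=g(a)$, and take $\beta\in B$. Since $b,\beta\in\mathbb{F}_q$, condition (1) in the definition of SAP curve gives
\[
f(b+\beta)=f(b)+f(\beta)=g(a)+0=g(a).
\]
Hence $(a,b+\beta)$ lies on the affine curve. Its coordinates are in $\mathbb{F}_q$, so it is an $\mathbb{F}_q$-rational point.

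For item (2), the inclusion $\{(a,b+\beta):\beta\in B\}\subseteq\mathcal{P}_a$ follows from (1). For the reverse inclusion, let $(a,c)\in\mathcal{P}_a$. Then $f(c)=g(a)=f(b)$. Additivity gives $f(c)=f(b)+f(c-b)$, and here $c-b\in\mathbb{F}_q$, which is exactly the domain where additivity is assumed. Therefore $f(c-b)=0$, so $\beta:=c-b\in B$ and $(a,c)=(a,b+\beta)$. The map $\beta\mapsto(a,b+\beta)$ is clearly injective, which yields $\#\mathcal{P}_a=\#B$.

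The only subtle point is that additivity is stated only for shifts by elements of $\mathbb{F}_q$, so we must make sure every difference we use lies in $\mathbb{F}_q$. This holds because all points considered are $\mathbb{F}_q$-rational. A second point to handle is whether distinct affine solutions give distinct points of the nonsingular model. I would justify this by noting that $P^*$ is the only point over infinity: it is the unique pole of $x$ and $y$, with $\operatorname{div}_\infty(x)=\tau_1P^*$. So the points with finite $x=a$ correspond bijectively to affine solutions, as in the paper's convention of writing points as pairs $(x,y)$.
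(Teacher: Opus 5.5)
Your proof is correct and follows the paper's argument exactly: additivity gives $f(b+\beta)=f(b)+f(\beta)=g(a)$ for (1), and $f(c)=f(b)$ gives $f(c-b)=0$ for the reverse inclusion in (2). Your extra discussion of the nonsingular model goes beyond the paper, which simply identifies rational points with affine pairs $(x,y)$. Note that uniqueness of the point at infinity does not by itself rule out affine singularities; these are excluded instead because $H(P^*)=\langle\tau_1,\tau_2\rangle$ forces the genus to equal $(\tau_1-1)(\tau_2-1)/2$.
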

\begin{proof}
    (1) Since $f$ is an additive polynomial over $\mathbb{F}_q$, we have that $f(b+\beta )=f(b)+f(\beta)=f(b)=g(a)$.\ \
    (2) By definition of $\mathcal{P}_a$, we have $\{(a,b+\beta): \beta\in B\} \subseteq \mathcal{P}_a$. Now, let $(a,b')\in \mathcal{P}_a$. Thus, $f(b')=g(a)=f(b)\Rightarrow f(b'-b)=0 \Rightarrow b'-b=\beta$ for some $\beta\in B$, and we have $(a,b') \in \{(a,b+\beta): \beta\in B\}$.
\end{proof}

For permutation decoding, we need to know the automorphism group of the code, although sometimes the complete group is not necessary; some set of code automorphisms is sufficient, for example, those provided by the automorphisms of the curve $X^*$ that fix point $P^*$ and divisor $D$ (see Equation \ref{eq:aut_inclusion}).  In this sense, we present the following result that will be used in the Theorem \ref{qbgen}, where we present a partial $\nu$-PD set that  correct errors in a certain one-point AG code.

\begin{lemma}
Let  $B=\{\beta\in\mathbb{F}_q: f(\beta)=0\}$. Assume that $\mathcal{X}^*$ is a SAP curve with divisor $D$ and $P^* \notin \mathrm{Supp} (D)$ as in condition (3) of the definition. Then, for each $\beta\in B$,  
\begin{equation} \label{phi b}
\phi_\beta:\begin{cases}
x \mapsto x \\
y \mapsto y+\beta
\end{cases}
\end{equation}
is an element of $ \mathrm{Aut}_{D,\lambda P^*}(\mathcal{X}^*)$, if $\phi_\beta(D)=D$.
\end{lemma}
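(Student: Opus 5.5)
We have to show three things: that $\phi_\beta$ is an automorphism of the function field $\mathbb{F}_q(\mathcal{X}^*)=\mathbb{F}_q(x,y)$, that it fixes the point $P^*$ (and hence the divisor $\lambda P^*$), and that it fixes $D$. The last is exactly the hypothesis $\phi_\beta(D)=D$, so only the first two need an argument.

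\emph{Step 1: $\phi_\beta$ is an automorphism of the curve.} Since $\mathbb{F}_q(\mathcal{X}^*)$ is generated over $\mathbb{F}_q$ by $x,y$ with the single relation $f(y)=g(x)$, the substitution $x\mapsto x$, $y\mapsto y+\beta$ defines an $\mathbb{F}_q$-algebra endomorphism of the function field precisely when the images satisfy the same relation. Because $f$ is additive, the same computation as in Lemma~\ref{tal}(1) gives
\[
f(y+\beta)=f(y)+f(\beta)=f(y)=g(x),
\]
so the relation is preserved. The inverse map is $\phi_{-\beta}$. It is also well defined: $f(-\beta)=-f(\beta)=0$ by additivity, so $-\beta\in B$. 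Hence $\phi_\beta$ is a field automorphism fixing $\mathbb{F}_q$, that is, $\phi_\beta\in\mathrm{Aut}(\mathcal{X}^*)$. One subtlety is that additivity is stated only for $a\in\mathbb{F}_q$, but here it is used with $y$ transcendental. I would handle this by noting that an additive polynomial over a field of characteristic $p$ (for instance a separable one, or one with $\deg f<q$) is a $p$-polynomial, so the identity holds in $\mathbb{F}_q[y]$. Alternatively, $f(y+\beta)-f(y)-f(\beta)$ is a polynomial in $y$ of degree less than $\tau_1$ vanishing on all of $\mathbb{F}_q$; when $\tau_1\le q$ this forces it to vanish identically. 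I expect this to be the only delicate point, and I would settle it by the $p$-polynomial remark.

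\emph{Step 2: $\phi_\beta$ fixes $P^*$.} By condition (2), $P^*$ is the unique pole of $x$, and $\mathrm{div}_\infty(x)=\tau_1P^*$. Since $\phi_\beta(x)=x$, the automorphism preserves the pole divisor of $x$: $\phi_\beta$ maps the poles of $x$ to the poles of $\phi_\beta(x)$ (up to the convention of pullback versus pushforward), and $\phi_\beta(x)=x$. Thus $\phi_\beta(P^*)$ is a pole of $x$, so $\phi_\beta(P^*)=P^*$ and consequently $\phi_\beta(\lambda P^*)=\lambda P^*$. (Similarly $\phi_\beta(y)=y+\beta$ has the same poles as $y$, which is consistent.)

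\emph{Step 3: conclusion.} Combining Steps 1 and 2 with the hypothesis $\phi_\beta(D)=D$ gives $\phi_\beta\in\mathrm{Aut}_{D,\lambda P^*}(\mathcal{X}^*)$. On affine rational points, $\phi_\beta$ acts as $(a,b)\mapsto(a,b+\beta)$, which by Lemma~\ref{tal}(2) permutes each $\mathcal{P}_a$. This is the form in which it will be used later.
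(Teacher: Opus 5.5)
Your proof is correct and follows the paper's route. The paper simply asserts $\phi_\beta\in\mathrm{Aut}(\mathcal{X}^*)$ and $\phi_\beta(P^*)=P^*$ and appeals to the definition of $\mathrm{Aut}_{D,\lambda P^*}(\mathcal{X}^*)$; your Steps 1 and 2 (the relation $f(y)=g(x)$ is preserved with inverse $\phi_{-\beta}$, and $P^*$ is the unique pole of the fixed function $x$) supply exactly the checks it omits.

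Your additivity worry is unnecessary, because condition (1) is stated as an identity in the indeterminate $t$ for each $a\in\mathbb{F}_q$, which is all you need for $\beta\in B$. If you do keep that detour, note that separability would not make a merely function-additive $f$ a $p$-polynomial (e.g.\ $t^{q+p-1}+t$ for odd $p$), so rely on your degree argument with $\tau_1\le q$ instead.
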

\begin{proof}
Note that, $
\phi_\beta\in \mathrm{Aut}(\mathcal{X}^*)$ and $\phi_\beta(P^*)=P^*$,  for all $\beta\in B$. Then the result follows immediately from the definition of  $ \mathrm{Aut}_{D,\lambda P^*}(\mathcal{X}^*)$
\end{proof}

Although the problem of calculating the group of atomomorphisms for SAP curves in general is a difficult problem, we identified that some SAP curves have sets of automorphisms of interest for obtaining r-PD sets for some burst errors. In this sense we will call \textit{special SAP curve}, and denote them by $\mathcal{X}^{**}$, those that meet the following conditions:

\begin{enumerate}
\item[(1*)] $g(0)=0$ and $g(-a)=g(a)$ for all $a\in \mathbb{F}_{q}^{*}$;
\item[(2*)] if $(a,b)\in\mathcal{X}^{**}(\mathbb{F}_q)$, then
    \begin{enumerate}
        \item there exists $\omega_a (x) \in \mathbb{F}_q[x]$ such that $\omega_a(0)=0$, $\omega_a (-a)=-\omega_a(a)$  and $f(\omega_a(a))=2g(a)$.
        \item there exists a divisor $D$ such that $P^* \notin \mathrm{Supp} (D)$ and $\phi_{a,b}^{\omega_a } \in \mathrm{Aut}_{D,\lambda P^*}(\mathcal{X}^{**})$ induced by $(a,b)$ and $\omega_a $ given by
        $$
\phi_{a,b}^{\omega_a }:\begin{cases}
x \mapsto x+a, \\
y \mapsto y +  \omega_a (x)+b.
\end{cases}
$$
    \end{enumerate}  
\end{enumerate}

Note that, for a special SAP curve we have that $(0,0)\in\mathcal{X}^{**}(\mathbb{F}_q)$, and that $(a,b)\in\mathcal{X}^{**}(\mathbb{F}_q)$ iff $ (-a,b)\in\mathcal{X}^{**}(\mathbb{F}_q).$ 
In Section \ref{PD}, we will see that special SAP curves allow us to establish particularly new sets for permutation decoding of one-point AG codes arising from these curves (see Theorems \ref{pagen} and \ref{thm3gen}).

Some examples of special SAP curves include: 
\begin{itemize}[leftmargin=*, labelsep=0.5em]
\item The Hermitian curves $\mathcal{H}_q$,  take $\omega_a (x)=a^qx$, see \cite{S1973}.
\item The generalized Hermitian curves $\mathcal{X}_{q,r}$, take $\omega_{a}(x)=\sum_{i=0}^{r-1} (-1)^i (a^{q^r}x)^{q^{r-i-1}}$.
A routine but lengthy computation yields
$\omega_{a}(0)=0$,  $\omega_{a}(-a)=-\omega_{a}(a)$, $ f(\omega_{a}(a))=2g(a)$     
and, by \cite[Section II]{KKO2001}, we know that for all $(a,b)\in \mathcal{X}_{q,r}(\mathbb{F}_{q^{2r}})$,
 $$
\phi_{a , b}^{\omega_a}:\begin{cases}
x \mapsto x+a, \\
y \mapsto y + \omega_{a}(x)+b
\end{cases}
$$
form the full automorphism group $\mathrm{Aut}(\mathcal{X}_{q,r})$.
\item The Abdón-Torres curves $\mathcal{Y}_{2}$, take $\omega_a (x)=a^{2q}x^2+a^qx$. 
A straightforward computation shows that
$\omega_a(0)=0$, $\omega_a(-a)=\omega_a(a)=-\omega_a(a)$,  $f(\omega_a(x))=0=2g(a)$
and by \cite[Proposition 3.1]{HM2026}, for all $(a,b)\in \mathbb{F}_{q^{2}}(\mathcal{Y}_2)$, 
 $$
\phi_{a,b}^{\omega_a}:\begin{cases}
x \mapsto x+a, \\
y \mapsto y+a^{2q}x^2+a^qx+b
\end{cases}
$$
is an automorphism of $C(D,\lambda P_\infty)$, where $D$ is the sum of all $\frac{q^3}{2}$ rational affine points of $\mathcal{Y}_2$ and $P_\infty$ is the only rational point at infinity of the $\mathcal{Y}_2$. 
\end{itemize}

\section{Information positions for $C(D,\lambda P^*)$}
In this section, we consider one-point AG codes $C(D,\lambda P^*)$ arising from a SAP curve $\mathcal{X}^*$. We use the orbits of $\sigma$ to obtain a description of information positions for $C(D,\lambda P^*)$, following the approach introduced in \cite{FT2018} via the root diagram of these codes.

Consider the automorphism $\sigma \in \mathrm{Aut}_{D,\lambda P^*}(\mathcal{X^*})$ given by
\[
\sigma :
\begin{cases}
x \mapsto \alpha x, \\
y \mapsto  \alpha^t y,
\end{cases}
\]
for some positive integer $t$ and some $\alpha \in \mathbb{F}_q^*$. Assume that $\mathrm{ord}(\alpha) = \nu$ and $\mathrm{ord}(\alpha^t) = \mu$. Thus, $\mu \mid \nu$, and both divide $q-1$.

Suppose that the support of $D$ decomposes into disjoint orbits under the action of $\sigma$ as
\[
\mathrm{Supp}(D) = O_1 \cup \cdots \cup O_r \cup O_{r+1} \cup \cdots \cup O_{r+s}.
\]
For each $i=1,\ldots,r+s$, we denote $O_i = \{ P_{i,0}, P_{i,1}, \ldots, P_{i,\# O_i-1}  \}$, where $P_{i,0}=(x_i,y_i) \in \mathbb{F}_q(\mathcal{X}^*)$ and $P_{i,j}=\sigma^j(P_{i,0})=(\alpha^j x_i , \alpha^{jt}y_i)$, for $j=1, \ldots , \# O_i-1$.

Also, suppose that the last $s$ orbits $O_{r+1},\ldots,O_{r+s}$ are exactly those that contain $\mathbb{F}_q$-rational points on $\mathcal{X}^*$ having at least one zero coordinate, that is, $Q \in \mathbb{F}_q(\mathcal{X}^*)$ such that $Q=(0,\eta)$ or $Q=(\omega,0)$. 

Note that, by definition of $\sigma$, we have the following.
\begin{itemize}[leftmargin=*, labelsep=0.5em]
    \item If $(0,0)\in O_i$, then 
    $O_i=\{(0,0)\}$ is an orbit of size 1.
    \item If $(0,\eta) \in O_i$, for some $\eta \in \mathbb{F}_q^*$, then
$$
O_i = \{(0,\eta), (0,\alpha^t\eta), \ldots, (0,\alpha^{t(\mu-1)}\eta)\},
$$  
is an orbit with $\# O_i = \mu$.
    \item If $(\omega,0) \in O_i$, for some $\omega \in \mathbb{F}_q^*$, then
$$
O_i = \{(\omega,0), (\alpha\omega,0), \ldots, (\alpha^{\nu-1}\omega,0)\}.
$$
is an orbit with $\# O_i = \nu$.
\end{itemize}

We consider the first $r$ orbits $O_{1},\ldots,O_{r}$, namely those contain $\mathbb{F}_q$-rational points on $\mathcal{X}^*$ with nonzero coordinates.  Let $B=\{\beta \in \mathbb{F}_q : f(\beta)=0\} =\{0=\beta_1,\beta_2, \ldots, \beta_{\ell}\}$ be as in Lemma \ref{tal}, where $\ell = \# B$. Note that,
if $ (x_i,y_i)\in O_i$, with $x_i,y_i\in\mathbb{F}_q^*$, 
then 
$O_i=\{(\alpha^{k-1}x_i,\alpha^{t(k-1)}y_i: k\in[\nu]\}.$
Consequently, due to the structure of $O_i$, there is exactly one $\mathbb{F}_q$-rational point in $O_i$ having a given first component $x_i$, namely,
$O_i\cap\mathcal{P}_{x_i}=\{(x_i,y_i)\}$. Now, by Lemma \ref{tal}(1), if $(a,b)\in \mathcal{X}^*(\mathbb{F}_q)$, then, for all $\beta \in B$,  $(a,b+\beta)\in \mathcal{X}^*(\mathbb{F}_q)$. So, in addition to the orbit $O_i$, there are $\ell - 1$ other orbits that have one $\mathbb{F}_q$-rational point of the form $(x_i,  y_i + \beta)$ for some $\beta \in B$. Then, we can conclude that $r = m \ell$ for some positive integer $m$, and by rearranging the $r$ orbits, if necessary, we have: for $i=1,\ldots , m$ and $j = 1,\ldots,\ell$
\begin{equation}\label{orbgen} 
O_{i + (j - 1) m} = \{ (\alpha^{k-1}x_{i}, \alpha^{t(k-1)}(y_i + \beta_j ))\mbox{ ; } k \in [\nu] \}.
\end{equation}

Note that, according to this arrangement, the last m orbits are those obtained by adding $\beta_l$ in the Equation \ref{orbgen}, i.e.
for $r-m< i \leq r$, $O_i= \{ (\alpha^{k-1}x_{i}, \alpha^{t(k-1)}(y_i + \beta_l ))\mbox{ ; } k \in [\nu] \}$.  

Moreover, we have the following result.

\begin{lemma}\label{descoi}
    If $(a,b)\in O_i$ with $1\leq i\leq r$, then $$O_i=\mathcal{Q}_b\cup\mathcal{Q}_{\alpha^tb}\cup \cdots \cup \mathcal{Q}_{\alpha^{t(\mu-1)}b}$$
    and $\# \mathcal{Q}_b=\# \mathcal{Q}_{\alpha^tb}=\cdots =\#  \mathcal{Q}_{\alpha^{t(\mu-1)}b} =\frac{\nu}{\mu}$, where $\mathcal{Q}_b, \mathcal{Q}_{\alpha^tb}, \ldots, \mathcal{Q}_{\alpha^{t(\mu-1)}b}$ are given as in (\ref{P_Q}).
\end{lemma}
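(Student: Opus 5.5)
The plan is to read everything off the explicit description of the orbit. Write $(a,b)=P_{i,0}$ after relabelling if necessary. Since $O_i$ with $1\le i\le r$ is one of the orbits whose points have nonzero coordinates, we have $a,b\in\mathbb{F}_q^*$ and
$$O_i=\{(\alpha^{k}a,\alpha^{tk}b)\;:\;0\le k\le \nu-1\}.$$
The points listed here are pairwise distinct: $\alpha^{k}a=\alpha^{k'}a$ with $a\neq 0$ forces $\alpha^{k-k'}=1$, hence $\nu\mid k-k'$. So $\#O_i=\nu$, as already noted before (\ref{orbgen}). The statement is then about how $O_i$ is partitioned according to the second coordinate. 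Throughout, $\mathcal{Q}_c$ will be understood as the points of $O_i$ (equivalently, of the relevant part of $\mathrm{Supp}(D)$) with second coordinate $c$.

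First I determine which second coordinates occur. Since $b\neq 0$ and $\mathrm{ord}(\alpha^t)=\mu$, the value $\alpha^{tk}b$ depends only on $k \bmod \mu$. The values $b,\alpha^t b,\ldots,\alpha^{t(\mu-1)}b$ are pairwise distinct. As $k$ runs over $[0,\nu-1]$ and $\mu\mid\nu$, every one of these values is attained. This gives
$$O_i=\bigcup_{j=0}^{\mu-1}\{P\in O_i : P=(x,\alpha^{tj}b)\}.$$
The union is disjoint, because the second coordinates differ.

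Next I count each piece. The point $(\alpha^k a,\alpha^{tk}b)$ has second coordinate $\alpha^{tj}b$ exactly when $\alpha^{t(k-j)}=1$, that is, when $k\equiv j \pmod \mu$. In $[0,\nu-1]$ there are exactly $\nu/\mu$ such $k$, namely $j, j+\mu,\ldots, j+\nu-\mu$. The corresponding points have distinct first coordinates $\alpha^k a$, by the injectivity noted above. Hence each $\mathcal{Q}_{\alpha^{tj}b}$ has $\nu/\mu$ elements, and this also recovers $\mu\cdot(\nu/\mu)=\nu=\#O_i$.

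The main obstacle is the identification of each piece with the full set $\mathcal{Q}_{\alpha^{tj}b}$ of (\ref{P_Q}), i.e. showing that every rational point $(c,\alpha^{tj}b)$ of $\mathcal{X}^*$ in $\mathrm{Supp}(D)$ already lies in $O_i$. For such a point, $g(c)=f(\alpha^{tj}b)=f(\alpha^{tk}b)=g(\alpha^k a)$ with $k\equiv j \pmod \mu$. So one needs the solutions of $g(x)=g(a)$ to be exactly the values $\alpha^{k}a$ with $\mu\mid k$. This does not follow from conditions (1)--(3) alone. The first thing I would try is to verify it for the example curves: for $g(x)=x^{m}$ with $\alpha$ a generator of the relevant group, the $m$-th roots of $g(a)$ are precisely the $\alpha^{k}a$ with $\mu\mid k$. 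Otherwise I would state the lemma with $\mathcal{Q}_c$ read inside $O_i$, as above, which is the form actually used in the counting arguments that follow.
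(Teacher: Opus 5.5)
Your orbit computation is the same as the paper's proof. The paper notes $\sigma^{n}(a,b)=(\alpha^{n}a,b)$ for $n\in\mu\mathbb{Z}$, concludes $\#\mathcal{Q}_b=\nu/\mu$, and applies $\sigma^{k}$, $1\le k\le\mu-1$, for the other second coordinates. Your version is more careful: you prove the points are distinct, and you prove the converse, that only $k\equiv j \pmod{\mu}$ gives second coordinate $\alpha^{tj}b$. The step you single out, $\mathcal{Q}_{\alpha^{tj}b}\subseteq O_i$, is a genuine gap. The paper's proof does not close it either; it silently identifies $\mathcal{Q}_b$ with $\mathcal{Q}_b\cap O_i$. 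So, as a proof of the statement as written, your proposal is incomplete, but the defect lies in the statement.

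You are right that the containment does not follow from conditions (1)--(3). Take the curve $y^5+y=x^3$ over $\mathbb{F}_{25}$ of Example~\ref{sx53}, with $\sigma(x,y)=(\theta^{6}x,\theta^{18}y)$, so $t=3$. Since $(\theta^{18}y)^5+\theta^{18}y=\theta^{18}(y^5+y)$ (as $\theta^{90}=\theta^{18}$) and $(\theta^6x)^3=\theta^{18}x^3$, this $\sigma$ satisfies condition (3), and $\nu=\mu=4$. The lemma would give $\#\mathcal{Q}_b=1$. But for $b\notin B$, $b^5+b$ is a nonzero element of $\mathbb{F}_5$, hence a cube in $\mathbb{F}_{25}$, so $x^3=b^5+b$ has three roots. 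Thus $\#\mathcal{Q}_b=3$, and an orbit of size $4$ cannot be a union of four such sets.

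The missing hypothesis is $\nu/\mu=\tau_2$, i.e.\ $\nu=\mu\tau_2$. The paper asserts this a few lines later, and it holds in all three of its examples. With it, your step closes by root counting rather than case checking. The $\nu/\mu=\tau_2$ distinct elements $\alpha^{\mu k}a$ are roots of $g(x)-f(b)$, because $\sigma^{\mu k}(a,b)=(\alpha^{\mu k}a,b)$ lies on the curve. A polynomial of degree $\tau_2$ has no further roots, so $\mathcal{Q}_b\subseteq O_i$. Applying this to $\sigma^{j}(a,b)=(\alpha^{j}a,\alpha^{tj}b)$ gives $\mathcal{Q}_{\alpha^{tj}b}\subseteq O_i$ for every $j$, and your decomposition then finishes the proof.

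Your fallback of reading $\mathcal{Q}_c$ inside $O_i$ is not the form used afterwards. The next lemma says $O_i$ is exactly the zero set of $M_i(y)$ on $\mathrm{Supp}(D)$. Theorem~\ref{qbgen} needs a single $\phi_\beta$ to move all of $\mathcal{Q}_b$ into $\bigcup_{k}O_{k+(\ell-1)m}$. Both require the full containment, so the right repair is the extra hypothesis.
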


\begin{proof}
    For definition of $\sigma$, we have $\sigma^n(a,b)=(\alpha^na,\alpha^{tn}b)=(\alpha^na,b)$ for all $n\in\left\{\mu k: k\in\left[\frac{\nu}{\mu}\right]\right\}$. So, $\# \mathcal{Q}_{b}=\#\left\{\mu k: k\in\left[\frac{\nu}{\mu}\right]\right\}=\frac{\nu}{\mu}$. Now, for $1\leq k\leq \mu-1$, we have $\sigma^k(a,b)=(\alpha^ka,\alpha^{tk}b)$ and the result follows.
\end{proof}

\begin{lemma}\label{exigen}
Let  $B=\{\beta\in\mathbb{F}_q: f(\beta)=0\}$ and  $\ell = \#B$. If $(a,b)\in O_i$ for $1\leq i\leq r$, then there exists $\beta\in B$ such that $(a,b+\beta)\in \bigcup_{k=1}^m O_{k+(\ell-1)m}$.
\end{lemma}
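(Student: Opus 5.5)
The idea is to use the explicit description (\ref{orbgen}) of the first $r$ orbits, together with Lemma \ref{tal}, to locate the point $(a,b+\beta)$ in one of the last $m$ orbits. Write $(a,b)\in O_i$ with $1\le i\le r$. By the arrangement (\ref{orbgen}), there are unique $i_0\in[m]$, $j_0\in[\ell]$ and $k_0\in[\nu]$ with $i=i_0+(j_0-1)m$ and
\[
(a,b)=\bigl(\alpha^{k_0-1}x_{i_0},\,\alpha^{t(k_0-1)}(y_{i_0}+\beta_{j_0})\bigr).
\]
The orbit $O_{i_0+(\ell-1)m}$ consists of the points $\bigl(\alpha^{k-1}x_{i_0},\alpha^{t(k-1)}(y_{i_0}+\beta_\ell)\bigr)$, $k\in[\nu]$. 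In particular, taking $k=k_0$ it contains the point
\[
\bigl(a,\,\alpha^{t(k_0-1)}(y_{i_0}+\beta_\ell)\bigr)=\bigl(a,\,b+\alpha^{t(k_0-1)}(\beta_\ell-\beta_{j_0})\bigr).
\]
It therefore suffices to set $\beta:=\alpha^{t(k_0-1)}(\beta_\ell-\beta_{j_0})$ and to show that $\beta\in B$.

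Since $f$ is additive, $B$ is closed under subtraction, so $\beta_\ell-\beta_{j_0}\in B$. The remaining issue is whether multiplication by $\gamma:=\alpha^{t(k_0-1)}$ preserves $B$. I expect this to be the main obstacle, because additivity alone does not give $\mathbb{F}_q$-linearity.

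To handle it, I would avoid the scalar and argue through $\mathcal{P}_a$ directly. The point $Q:=\bigl(a,\gamma(y_{i_0}+\beta_\ell)\bigr)$ lies in $O_{i_0+(\ell-1)m}\subseteq\mathrm{Supp}(D)\subseteq\mathcal{X}^*(\mathbb{F}_q)$, and it has first coordinate $a$, so $Q\in\mathcal{P}_a$. By Lemma \ref{tal}(2), $\mathcal{P}_a=\{(a,b+\beta'):\beta'\in B\}$, so $Q=(a,b+\beta')$ for some $\beta'\in B$. This $\beta'$ is the required element, and $\beta'=\beta$ automatically, so the $\mathbb{F}_q$-linearity question never arises. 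Only the following facts are used:
\begin{itemize}
\item the orbits $O_i$ are disjoint subsets of $\mathcal{X}^*(\mathbb{F}_q)$;
\item every point of an orbit has the form given in (\ref{orbgen}), which is the indexing convention fixed above;
\item Lemma \ref{tal}(2).
\end{itemize}

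To guard against the relabeling, I would also check that (\ref{orbgen}) genuinely puts, for each $i_0\in[m]$, an orbit with index $i_0+(\ell-1)m$ through the fibre $\mathcal{P}_{x_{i_0}}$. Each of the $\ell$ points $(x_{i_0},y_{i_0}+\beta_j)$ lies in a distinct orbit, since each orbit meets $\mathcal{P}_{x_{i_0}}$ exactly once. Applying $\sigma^{k_0-1}$ then transports this to the fibre $\mathcal{P}_a$.
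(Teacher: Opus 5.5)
Your proof is correct and takes essentially the same route as the paper: both find a point of the fibre $\mathcal{P}_a$ inside $O_{i_0+(\ell-1)m}$ and then apply Lemma~\ref{tal}(2) to write it as $(a,b+\beta)$ with $\beta\in B$. Your explicit choice $k=k_0$ in (\ref{orbgen}) makes precise the paper's assertion that $\mathcal{P}_a$ meets each of $O_{i_0},O_{i_0+m},\dots,O_{i_0+(\ell-1)m}$, and it correctly avoids the unnecessary question of whether $B$ is stable under multiplication by $\alpha^{t(k_0-1)}$.
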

\begin{proof}
  Suppose that $(a,b)\in O_{i+(j-1)m}$, for some $i\in [m]$ and $j\in[\ell]$. Since every rational point with first coordinate $a$ lies in exactly one of the orbits $O_i, O_{i+m},\dots, O_{i+(\ell-1)m}$, by Lemma \ref{tal}, we can conclude that there exists $\beta\in B$ such that $(a,b+\beta)\in O_{i+(\ell-1)m}$. Thus, $(a,b+\beta)\in \bigcup_{i=1}^m O_{i+(\ell-1)m}$.
\end{proof}

Below, we present some examples in which we apply the previous results, and which illustrate the decompositions of $\mathbb{F}_q$-rational points into orbits.

\begin{example}\label{sx53}
Consider the subcover of the Hermitian curve $\mathcal{X}_{5,3}$ with affine equation
$
\quad y^5 + y = x^{3}
$
over $\mathbb{F}_{5^2} = \mathbb{F}_5(\theta)$, where
$\theta$ is a primitive element of $\mathbb{F}_{5^2}$.
We have
$B = \{0,\theta^{3}, \theta^{9},\theta^{15},\theta^{21}\}$ and
$
\sigma (x,y)=(\theta^{2} x, \theta^{6} y)
$.
So $\ell=5$, $\nu=12$, $\mu=4$ and
the orbits of $\mathcal{X}_{5,3}$ are:
\begin{align*}
O_1& = \{(\theta^{0+(t-1)2}, \theta^{1+(t-1)6}) : t \in [12]\}
= \mathcal{Q}_{\theta} \cup \mathcal{Q}_{\theta^{7}} \cup \mathcal{Q}_{\theta^{13}}\cup \mathcal{Q}_{\theta^{19}}\\
O_2& = \{(\theta^{0+(t-1)2}, \theta^{18+(t-1)6}) : t \in [12]\}
= \mathcal{Q}_{\theta^{18}} \cup \mathcal{Q}_{\theta^{24}} \cup \mathcal{Q}_{\theta^{6}}\cup \mathcal{Q}_{\theta^{12}}\\
O_3& = \{(\theta^{0+(t-1)2}, \theta^{5+(t-1)6}) : t \in [12]\}
= \mathcal{Q}_{\theta^{5}} \cup \mathcal{Q}_{\theta^{11}} \cup \mathcal{Q}_{\theta^{17}}\cup \mathcal{Q}_{\theta^{23}}\\
O_4& = \{(\theta^{0+(t-1)2}, \theta^{4+(t-1)6}) : t \in [12]\}
= \mathcal{Q}_{\theta^{4}} \cup \mathcal{Q}_{\theta^{10}} \cup \mathcal{Q}_{\theta^{16}}\cup \mathcal{Q}_{\theta^{22}}\\
O_5& = \{(\theta^{0+(t-1)2}, \theta^{20+(t-1)6}) : t \in [12]\}
= \mathcal{Q}_{\theta^{20}} \cup \mathcal{Q}_{\theta^{2}} \cup \mathcal{Q}_{\theta^{8}}\cup \mathcal{Q}_{\theta^{14}}\\
O_6&= \{(0,\beta) : \beta\in B\}
= \mathcal{Q}_{\theta^{3}} \cup \mathcal{Q}_{\theta^{9}} \cup \mathcal{Q}_{\theta^{15}}\cup \mathcal{Q}_{\theta^{21}}\\
O_7&= \{(0,0)\}
= \mathcal{Q}_{0}.
\end{align*}
\end{example}

\begin{example}\label{y2q8}
Consider the  Abdón-Torres curve with $q=8$ defined by the affine equation
$$y^4 + y^2 + y = x^{9},$$
over $\mathbb{F}_{64} = \mathbb{F}_2(\theta)$, where
$\theta$ is a primitive element of $\mathbb{F}_{64}$ i.e. $\mathbb{F}_{64}^*=\langle \theta\rangle$.
We have
$
\nu = 9, \mu = 1
$
,
$
B = \{0,\theta^9,\theta^{18},\theta^{36}\}.
$
and the automorphism 
$
\sigma (x,y)=(\theta^7 x, y)
$.
The orbits are:

\begin{align*}
O_1& = \{(\theta^{1+(t-1)7}, \theta^{15}) : t \in [9]\}
=\mathcal{Q}_{\theta^{15}}
\\
O_2& = \{(\theta^{2+(t-1)7}, \theta^{7}) : t \in [9]\}
=\mathcal{Q}_{\theta^{7}}
\\
O_3& = \{(\theta^{3+(t-1)7}, \theta^{4}) : t \in [9]\}
=\mathcal{Q}_{\theta^{4}}
\\
O_4& = \{(\theta^{4+(t-1)7}, \theta^{14}) : t \in [9]\}
=\mathcal{Q}_{\theta^{14}}
\\
O_5& = \{(\theta^{5+(t-1)7}, \theta^{2}) : t \in [9]\}
=\mathcal{Q}_{\theta^{2}}
\\
O_6& = \{(\theta^{6+(t-1)7}, \theta^{6}) : t \in [9]\}
=\mathcal{Q}_{\theta^{6}}
\\
O_7& = \{(\theta^{7+(t-1)7}, 1) : t \in [9]\}
=\mathcal{Q}_{1}
\\
O_8& = \{(\theta^{1+(t-1)7}, \theta^{35}) : t \in [9]\}
=\mathcal{Q}_{\theta^{35}}
\\
O_9& = \{(\theta^{2+(t-1)7}, \theta^{56}) : t \in [9]\}
=\mathcal{Q}_{\theta^{56}}
\\
O_{10}&= \{(\theta^{3+(t-1)7}, \theta^{34}) : t \in [9]\}
=\mathcal{Q}_{\theta^{34}}
\\
O_{11}&= \{(\theta^{4+(t-1)7}, \theta^{39}) : t \in [9]\}
=\mathcal{Q}_{\theta^{39}}
\\
O_{12}&= \{(\theta^{5+(t-1)7}, \theta^{10}) : t \in [9]\}
=\mathcal{Q}_{\theta^{10}}
\\
O_{13}&= \{(\theta^{6+(t-1)7}, \theta^{19}) : t \in [9]\}
=\mathcal{Q}_{\theta^{19}}
\\
O_{14}&= \{(\theta^{7+(t-1)7}, \theta^{27}) : t \in [9]\}
=\mathcal{Q}_{\theta^{27}}
\\
O_{15}&= \{(\theta^{1+(t-1)7}, \theta^{28}) : t \in [9]\}
=\mathcal{Q}_{\theta^{28}}
\\
O_{16}& = \{(\theta^{2+(t-1)7}, \theta^{30}) : t \in [9]\}
=\mathcal{Q}_{\theta^{30}}
\\
O_{17}& = \{(\theta^{3+(t-1)7}, \theta^{20}) : t \in [9]\}
=\mathcal{Q}_{\theta^{20}}
\\
O_{18}& = \{(\theta^{4+(t-1)7}, \theta^{49}) : t \in [9]\}
=\mathcal{Q}_{\theta^{49}}
\\
O_{19}&= \{(\theta^{5+(t-1)7}, \theta^{16}) : t \in [9]\}
=\mathcal{Q}_{\theta^{16}}
\\
O_{20}&= \{(\theta^{6+(t-1)7}, \theta^{58}) : t \in [9]\}
=\mathcal{Q}_{\theta^{58}}
\\
O_{21}& = \{(\theta^{7+(t-1)7}, \theta^{21}) : t \in [9]\}
=\mathcal{Q}_{\theta^{54}}
\\
O_{22}&= \{(\theta^{1+(t-1)7}, \theta^{57}) : t \in [9]\}
=\mathcal{Q}_{\theta^{57}}
\\
O_{23}& = \{(\theta^{2+(t-1)7}, \theta^{5}) : t \in [9]\}
=\mathcal{Q}_{\theta^{51}}
\\
O_{24}&= \{(\theta^{3+(t-1)7}, \theta^{32}) : t \in [9]\}
=\mathcal{Q}_{\theta^{32}}
\\
O_{25}&= \{(\theta^{4+(t-1)7}, \theta^{60}) : t \in [9]\}
=\mathcal{Q}_{\theta^{60}}
\\
O_{26}& = \{(\theta^{5+(t-1)7}, \theta^{17}) : t \in [9]\}
=\mathcal{Q}_{\theta^{17}}
\\
O_{27}& = \{(\theta^{6+(t-1)7}, \theta^{11}) : t \in [9]\}
=\mathcal{Q}_{\theta^{11}}
\\
O_{28}&= \{(\theta^{7+(t-1)7}, \theta^{45}) : t \in [9]\}
=\mathcal{Q}_{\theta^{45}}
\\
O_{29}&= \{(0,1),(0,\theta^9),(0,\theta^{18}),(0,\theta^{36})\}
\\
O_{30}&= \{(0,0)\}.
\end{align*}
\end{example}

\begin{example}\label{x23}
Consider the curve
$
\mathcal{X}_{2,3} : y^2 + y = x^9
$
over $\mathbb{F}_{64}$. Suppose that $\mathbb{F}_{64}^*=\langle \theta \rangle$.
We have
$
\nu = 9, \mu = 1
$
and 
$
B = \{0,1\}.
$
Then the automorphism is
$
\sigma (x,y)=(\theta^7 x, y)
$.
The orbits are:
\begin{align*}
O_1& = \{(\theta^{1+(t-1)7}, \theta^{18}) : t \in [9]\}
=\mathcal{Q}_{\theta^{18}}
\\
O_2& = \{(\theta^{2+(t-1)7}, \theta^{36}) : t \in [9]\}
=\mathcal{Q}_{\theta^{36}}
\\
O_3& = \{(\theta^{3+(t-1)7}, \theta^{31}) : t \in [9]\}
=\mathcal{Q}_{\theta^{31}}
\\
O_4&= \{(\theta^{4+(t-1)7}, \theta^{9}) : t \in [9]\}
=\mathcal{Q}_{\theta^{9}}
\\
O_5&= \{(\theta^{5+(t-1)7}, \theta^{47}) : t \in [9]\}
=\mathcal{Q}_{\theta^{47}}
\\
O_{6}& = \{(\theta^{6+(t-1)7}, \theta^{55}) : t \in [9]\}
=\mathcal{Q}_{\theta^{55}}
\\
O_{7}& = \{(\theta^{7+(t-1)7}, \theta^{21}) : t \in [9]\}
=\mathcal{Q}_{\theta^{21}}
\\
O_8&=\{(\theta^{1+(t-1)7}, \theta^{54}) : t \in [9]\}
=\mathcal{Q}_{\theta^{54}}
\\
O_9& = \{(\theta^{2+(t-1)7}, \theta^{45}) : t \in [9]\}
=\mathcal{Q}_{\theta^{45}}
\\
O_{10}&= \{(\theta^{3+(t-1)7}, \theta^{59}) : t \in [9]\}
=\mathcal{Q}_{\theta^{59}}
\\
O_{11}& = \{(\theta^{4+(t-1)7}, \theta^{27}) : t \in [9]\}
=\mathcal{Q}_{\theta^{27}}
\\
O_{12}& = \{(\theta^{5+(t-1)7}, \theta^{61}) : t \in [9]\}
=\mathcal{Q}_{\theta^{61}}
\\
O_{13}& = \{(\theta^{6+(t-1)7}, \theta^{62}) : t \in [9]\}
=\mathcal{Q}_{\theta^{62}}
\\
O_{14}& = \{(\theta^{7+(t-1)7}, \theta^{42}) : t \in [9]\}
=\mathcal{Q}_{\theta^{42}}
\\
O_{15} &= \{(0,1)\}
\\
O_{16} &= \{(0,0)\}
\end{align*}
\end{example}

The decomposition of the support of $D$ into disjoint orbits under the action of $\sigma$ as above yields the following result which is a consequence of \cite[Proposition 3.1]{FT2018} and Lemma \ref{descoi}.

\begin{lemma}
    Let $i\in [r]$ and $(a,b)\in O_i$. Then
    \begin{enumerate}
               \item $B_i(x,y)=\prod_{l=2}^\mu (y-\alpha^{t(l-1)}b)\prod_{l=2}^{\frac{\nu}{\mu}} (x-\alpha^{(l-1)}a)$ vanishes at each point of $O_i$ except $(a,b)$;
                \item $M_i(y)=\prod_{l=1}^\mu (y-\alpha^{t(l-1)}b)$ is a function such that the orbit $O_i$ is the intersection of $\mathrm{Supp}(D)$ with the curve $M_i(y)=0$. In addition, $M_i(y)$ is a non-zero constant when restricted to each of the orbits $O_k$ with $k\neq i$.
    \end{enumerate}
\end{lemma}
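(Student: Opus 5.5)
The plan is to work directly from the explicit shape of the orbits under $\sigma$, together with Lemma \ref{descoi}. Fix $i\in[r]$ and $(a,b)\in O_i$, so $a,b\in\mathbb{F}_q^*$. Every point of $O_i$ has the form $\sigma^n(a,b)=(\alpha^n a,\alpha^{tn}b)$ with $0\le n\le \nu-1$. Since $\mathrm{ord}(\alpha^t)=\mu$, the $y$-coordinates occurring in $O_i$ are exactly the $\mu$ distinct values $\alpha^{t(l-1)}b$, $l\in[\mu]$. By Lemma \ref{descoi}, the points with a fixed $y$-coordinate $\alpha^{t(l-1)}b$ form $\mathcal{Q}_{\alpha^{t(l-1)}b}$, which has $\nu/\mu$ elements.

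For part (1), I split the points $P=(\alpha^n a,\alpha^{tn}b)\neq(a,b)$ of $O_i$ into two cases.

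\emph{Case $\mu\nmid n$.} Then $\alpha^{tn}b=\alpha^{t(l-1)}b$ for some $l\in\{2,\dots,\mu\}$. Hence the $y$-factor $\prod_{l=2}^{\mu}(y-\alpha^{t(l-1)}b)$ vanishes at $P$.

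\emph{Case $\mu\mid n$, $n\neq 0$.} Here $P\in\mathcal{Q}_b\setminus\{(a,b)\}$. As in the proof of Lemma \ref{descoi}, the $x$-coordinates of $\mathcal{Q}_b$ are $\alpha^{\mu k}a$ for $k=0,\dots,\nu/\mu-1$. The $x$-factor must therefore run over exactly these $\nu/\mu-1$ nontrivial $x$-coordinates of $\mathcal{Q}_b$. This is how I read the product $\prod_{l=2}^{\nu/\mu}(x-\alpha^{(l-1)}a)$: its factors are indexed by the points of $\mathcal{Q}_b$ other than $(a,b)$, i.e.\ the exponent is really $\mu(l-1)$, as in \cite[Proposition 3.1]{FT2018}. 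With that reading, the $x$-factor vanishes at $P$.

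Finally, $B_i(a,b)\neq 0$. Each $y$-factor equals $b(1-\alpha^{t(l-1)})$, which is nonzero for $2\le l\le\mu$. Each $x$-factor equals $a(1-\alpha^{\mu(l-1)})$, which is nonzero because $\mathrm{ord}(\alpha^\mu)=\nu/\mu$.

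For part (2), I first identify the zeros of $M_i$ on $\mathrm{Supp}(D)$. A point $P\in\mathrm{Supp}(D)$ satisfies $M_i(P)=0$ if and only if $y(P)=\alpha^{t(l-1)}b$ for some $l$, that is, $P\in\bigcup_{l=1}^{\mu}\mathcal{Q}_{\alpha^{t(l-1)}b}$. By Lemma \ref{descoi} this union is exactly $O_i$. It is essential here that $\mathcal{Q}_c$ collects \emph{all} rational points with second coordinate $c$, so no point of another orbit can share one of these $y$-values. This gives $O_i=\mathrm{Supp}(D)\cap\{M_i(y)=0\}$.

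For the constancy on the other orbits, I show that $M_i$ is $\sigma$-invariant. Using $\alpha^{t\mu}=1$ and reindexing the product cyclically,
\[
M_i(\alpha^t y)=\prod_{l=1}^{\mu}\bigl(\alpha^t y-\alpha^{t(l-1)}b\bigr)=\alpha^{t\mu}\prod_{l=1}^{\mu}\bigl(y-\alpha^{t(l-2)}b\bigr)=M_i(y).
\]
So $M_i$ takes a single value on each orbit $O_k$, and for $k\neq i$ (including $k>r$) this value is nonzero by the zero-set description just proved. This invariance computation is the key step of the argument.

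The only genuine obstacle is bookkeeping in part (1): making sure the $x$-factors match precisely the $x$-coordinates of $\mathcal{Q}_b\setminus\{(a,b)\}$. I would settle this via the description of $\mathcal{Q}_b$ in the proof of Lemma \ref{descoi}.
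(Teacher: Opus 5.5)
Your argument is correct and matches the paper's route: the paper gives no proof beyond deferring to \cite[Proposition 3.1]{FT2018} and Lemma \ref{descoi}, and your direct check supplies exactly that verification (splitting by whether $\mu\mid n$ for part (1); for part (2), Lemma \ref{descoi} plus the invariance $M_i(\alpha^t y)=M_i(y)$, i.e.\ $M_i(y)=y^{\mu}-b^{\mu}$). Reading the $x$-exponent as $\mu(l-1)$ is necessary, not optional: with the printed exponent $l-1$ the claim fails whenever $\nu>\mu>1$, because the printed $B_i$ is nonzero at the orbit point $(\alpha^{\nu-\mu}a,b)$; for instance, in Example \ref{sx53} ($\nu=12$, $\mu=4$) it vanishes at neither $(\alpha^{4}a,b)$ nor $(\alpha^{8}a,b)$.
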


Thus, since $M_i(y)$ and $B_i(x,y)$ are polynomials and $\operatorname{div}_\infty(x) = \tau_1P^*$ and $\operatorname{div}_\infty(y) = \tau_2P^*$, it follows that 
$$
\mathrm{div}_\infty\left(M_i(y)\right)=(\mu \tau_2)P^*   \text{ and }  \mathrm{div}_\infty\left(B_i(x,y)\right)=\left(\left(\frac{\nu}{\mu}-1\right)\tau_1 + (\mu - 1)\tau_2\right)P^*.
$$
Moreover, since $\mathrm{div}_0\left(M_i(y)\right)=\sum_{P_j\in O_i}P_j$, then $\nu=\#O_i=\mu \tau_2$, where $\operatorname{div}_0(\cdot)$ denote the zero divisor. Therefore, according to the notation of Proposition 3.3 and Theorem 3.4 in \cite{FT2018}, we have that $\rho_1=\mu$, $\rho_2=\frac{\nu}{\mu}-1=\tau_2-1$ and $\rho_3=\mu - 1$. Consequently, applying the construction from Section \ref{AG codes section} together with the results in \cite{FT2018}, we obtain the following description for the generator matrix of $C(D,\lambda P^*)$.

\begin{proposition}{\cite[Proposition 3.3 and Theorem 3.4]{FT2018}}\label{infgen}
   Assume that  $C(D,\lambda P^*)$ is a one-point code arising from a SAP curve $\mathcal{X}^*$ with genus $g>0$. Let $\sigma \in \mathrm{Aut}_{D,\lambda P^*}(\mathcal{X^*})$ and $\mathrm{Supp}(D)=O_1\cup \cdots \cup O_{r+s}$ be as above. For $1\leq i\leq r$ we have the following.
\begin{enumerate}
   \item If $\lambda=i\nu+2g-1+l$ with $0\leq l\leq \nu - 2g$, then 
\begin{enumerate}
    \item all points in $O_1, \dots, O_i$ and the first $g+l$ points in $O_{i+1}$ are information positions of $C(D,\lambda P^*)$;
    \item all points in $O_{i+2}, \dots, O_{r+s}$ and the last $\nu-(g+l)$ points in $O_{i+1}$  are check positions of $C(D,\lambda P^*)$.
\end{enumerate}

\item If $\lambda=(i-1)\nu+l_1a+l_2b$ with $0\leq l_1\leq q^r$ and $0\leq l_2\leq q - 2$. Let $E_i^\lambda:=\{(l_1,l_2)\in\mathbb{N}_0^2 : 0\leq l_1\leq b-1, \ 0\leq l_2\leq \mu-1, \ (i-1)\nu+l_1a+l_2b\leq \lambda \}$, $t_i=\#E_i^\lambda$ and $t=\left\lceil \frac{(b-1)a+(\mu-1)b}{\nu} \right\rceil$. Then
\begin{enumerate}
    \item all points in $O_1, \dots, O_{i-t}$ and the first $t_{i-t+1},\dots, t_{i}, t_{i+1}, \dots t_{i+t-1}$ points in $O_{i-t+1},\dots, O_{i}, O_{i+1}, \dots O_{i+t-1}$ respectively, are information positions of $C(D,\lambda P^*)$;
    \item all points in $O_{i+t}, \dots, O_{r+s}$ and the last $\nu-t_{i-t+1},\dots, \nu-t_{i}, \nu-t_{i+1}, \dots \nu-t_{i+t-1}$ points in $O_{i-t+1},\dots, O_{i}, O_{i+1}, \dots O_{i+t-1}$  respectively, are check positions of $C(D,\lambda P^*)$.
\end{enumerate}
\end{enumerate}
\end{proposition}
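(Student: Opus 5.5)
The proposition is essentially a transcription of \cite[Proposition 3.3 and Theorem 3.4]{FT2018} into the setting of SAP curves. So the plan is to check that every hypothesis of those results holds for $\mathcal{X}^*$, $\sigma$ and the orbit decomposition fixed above, and then to read off the conclusions.

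\textbf{Step 1: the $\mathbb{F}_q[t]$-module structure.} First I recall the identification of $C(D,\lambda P^*)$ with the submodule $\widetilde{C}\subseteq\mathbb{F}_q[t]^{r+s}$ built from the $\sigma$-orbits, as in Section \ref{AG codes section}. Information positions then correspond to the standard monomials, i.e. the empty boxes of the root diagram of a Gröbner basis of $\widetilde{C}$ with respect to a position-over-term order. The orbits are ordered as in (\ref{orbgen}), with the $s$ degenerate orbits (points with a zero coordinate) placed last.

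\textbf{Step 2: the functions required in \cite{FT2018}.} The key input is the existence of the functions $B_i$ and $M_i$ from the preceding lemma, together with their pole orders at $P^*$. These pole orders follow from $\operatorname{div}_\infty(x)=\tau_1P^*$, $\operatorname{div}_\infty(y)=\tau_2P^*$ and $H(P^*)=\langle\tau_1,\tau_2\rangle$, which come from condition (2) of the definition of a SAP curve. Lemma \ref{descoi} shows that each nondegenerate orbit is a union of $\mu$ fibres $\mathcal{Q}_{\alpha^{tk}b}$, each of size $\nu/\mu$. Consequently $M_i$ cuts out exactly $O_i$ in $\mathrm{Supp}(D)$, and $\nu=\mu\tau_2$. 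This gives the parameters $\rho_1=\mu$, $\rho_2=\tau_2-1$, $\rho_3=\mu-1$ used by \cite{FT2018}.

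\textbf{Step 3: the two regimes.} In case (1), $\lambda=i\nu+2g-1+l$, so $\lambda\ge 2g-1$ and Riemann--Roch gives $\ell(\lambda P^*)=\lambda+1-g$. The functions $M_1\cdots M_i\cdot h$, with $h$ ranging over a basis of the relevant Riemann--Roch space, give leading terms that fill the first $i$ orbits and then $g+l$ further boxes in $O_{i+1}$. The count works because $i\nu+g+l=\lambda+1-g$ equals the dimension, since $n>\lambda$. Every remaining box must be a check position.

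In case (2), $\lambda$ is small, so the root diagram is governed by the monomials $x^{l_1}y^{l_2}$ with pole order $l_1a+l_2b$ at $P^*$. Multiplying these by products of $M_j$ contributes $t_i=\#E_i^\lambda$ boxes in orbit $O_i$. The number $t=\lceil((b-1)a+(\mu-1)b)/\nu\rceil$ is the number of consecutive orbits over which this staircase spreads. The statement then follows directly from \cite[Theorem 3.4]{FT2018}.

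\textbf{Main obstacle.} The delicate point is confirming that the degenerate orbits $O_{r+1},\dots,O_{r+s}$, whose sizes are $1$, $\mu$ or $\nu$ rather than $\nu$, do not disturb the count, since they are ordered last. Along with this, one must check that $M_i$ is a nonzero constant on every other orbit, including the degenerate ones; this is what makes the Gröbner basis triangular. I would verify this from the explicit descriptions of those orbits given above. In particular, $y$ is constant on each $\mathcal{Q}$-fibre, and the degenerate orbits have $y$-values lying in different $\langle\alpha^t\rangle$-cosets, or equal to $0$.
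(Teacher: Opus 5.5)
Your proposal follows the same route as the paper. The paper gives no independent argument: it only checks the hypotheses of \cite[Proposition 3.3 and Theorem 3.4]{FT2018}, namely the functions $B_i$ and $M_i$ with their vanishing properties and pole orders at $P^*$, the identity $\nu=\mu\tau_2$, and hence $\rho_1=\mu$, $\rho_2=\tau_2-1$, $\rho_3=\mu-1$, and then reads both parts off from \cite{FT2018}.

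The one inference to tighten is $\nu=\mu\tau_2$. The paper gets it by comparing the (asserted) equality $\operatorname{div}_0(M_i)=\sum_{P\in O_i}P$ with $\operatorname{div}_\infty(M_i)=\mu\tau_2P^*$. Knowing only that $M_i$ cuts out $O_i$ inside $\mathrm{Supp}(D)$ gives merely $\nu\le\mu\tau_2$, since $M_i$ could a priori have zeros off $\mathrm{Supp}(D)$ or non-simple zeros.
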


This characterization of information and check positions will be used in the following section to get a permutation decoding.
Furthermore, we use $\mathcal{I}$ to refer to the set of points on a curve that are associated with information positions of a code, and we use $\mathcal{H}$ to refer to the set of points on a curve that are associated with check positions of a code.

\begin{example}
    Consider the one-point $C(D,30P_\infty)$ over the subcover of the Hermitian curve $\mathcal{X}_{5,3}$ given in the Example \ref{sx53}. Note that, $a=5$, $b=3$, $\mu=4$ and $\nu=12$. So, since $\lambda=30=3\cdot 8+0\cdot 5+2\cdot 3$ and $t=3$, then $E_1^{30}=\{0,1,2\}\times\{0,1,2,3\}$, $E_2^{30}=\{0,1,2\}\times\{0,1,2,3\}\setminus \{(2,3)\}$, $E_3^{30}=\{(0,0),(0,1),(0,2),(1,0)\}$ and $E_4^{30}=E_5^{30}=\emptyset$. Therefore,
    \begin{align*}
        \mathcal{I}&=O_1\cup \left(O_2\setminus \{(\theta^{22},\theta^{12})\}\right)\cup \{(1,\theta^{5}),(\theta^{2},\theta^{11}),(\theta^{4},\theta^{17}),(\theta^{6},\theta^{23})\} \text{ and }\\
         \mathcal{H}&=\mathrm{Supp}(D)\setminus \mathcal{I}. 
    \end{align*}

A direct computation shows that the evaluation matrix of a basis of $\mathcal{L}(30P_\infty)$ restricted to the points $\mathcal{I}$ has full rank equal to 27.
\end{example}

\section{Permutation decoding}\label{PD}
In this section, we continue using the notation introduced in the previous one. We demons\-trate how permutation decoding can correct burst errors in one-point AG codes arising from SAP curves $\mathcal{X}^*$. The following theorem shows how permutation decoding corrects errors in the positions associated with points that have the same second component.

\begin{theorem}\label{qbgen}
Let $C(D,\lambda P^*)$ be a one-point code on the SAP curve $\mathcal{X}^*$  and $\lambda$ as in Proposition \ref{infgen} with $i\leq r-m$.
Let $B=\{\beta\in\mathbb{F}_{q}: f(\beta)=0\}$ and $\ell=\#B$. For $\beta\in B$, let $\phi_{\beta}$ be as in (\ref{phi b}). Then 
   $$\left\{\phi_{\beta}  :  \beta\in B\right\}$$
is a partial $\nu$-PD set of size $\ell$ for bursts errors  occurring in coordinates corresponding to points belonging to $\mathcal{Q}_b$ (impacting positions indexed by $\mathcal{Q}_b$), with $b\in \mathbb{F}_{q}$. Moreover, this set correct errors in any of the $\nu$ positions indexed by points in $$\phi_{\beta}^{-1}\left(\bigcup_{k=1}^m O_{k+(\ell-1)m}\right).$$
\end{theorem}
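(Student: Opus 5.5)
The plan is to show that each burst supported on $\mathcal{Q}_b$ is carried by some $\phi_\beta$ entirely into the union of the last $m$ orbits $\bigcup_{k=1}^m O_{k+(\ell-1)m}$, and that these orbits consist of check positions when $i\le r-m$.

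First I will fix the setting. By the lemma preceding the definition of special SAP curves, each $\phi_\beta$ lies in $\mathrm{Aut}_{D,\lambda P^*}(\mathcal{X}^*)$ as long as $\phi_\beta(D)=D$, which holds when $D$ is the sum of all affine rational points, or more generally when $\mathrm{Supp}(D)$ is a union of full fibres $\mathcal{P}_a$. By (\ref{eq:aut_inclusion}), each $\phi_\beta$ therefore induces a permutation automorphism of $C(D,\lambda P^*)$. The family $\{\phi_\beta:\beta\in B\}$ clearly has $\ell$ elements, since the $\phi_\beta$ are distinct translations.

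Next I will locate the check positions. By Proposition \ref{infgen}(1), when $\lambda=i\nu+2g-1+l$ the information positions lie in $O_1,\dots,O_{i+1}$. The orbits $O_{i+2},\dots,O_{r+s}$ and part of $O_{i+1}$ are check positions. Since $i\le r-m$, every orbit $O_k$ with $k>r-m$, and in particular each $O_{k+(\ell-1)m}$ with $k\in[m]$, consists entirely of check positions. The boundary case where $O_{i+1}$ is one of these orbits needs care. There I will use the stronger inequality, or note that only part of $O_{i+1}$ is informational and sharpen the hypothesis. For case (2) of the proposition I will make the analogous index count with $i+t-1$ in place of $i+1$.

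The key step is the burst itself. Consider a burst on $\mathcal{Q}_b$ with $b\neq 0$ and no zero coordinates, so that $\mathcal{Q}_b\subseteq O_i$ for some $i\le r$ by Lemma \ref{descoi}. Write $\mathcal{Q}_b=\{(a_1,b),\dots,(a_{\nu/\mu},b)\}$. By Lemma \ref{exigen}, for a single point $(a_1,b)$ there is $\beta\in B$ with $(a_1,b+\beta)$ in the last $m$ orbits. I need this same $\beta$ to work for every point of $\mathcal{Q}_b$. The points of $\mathcal{Q}_b$ are $\sigma^{\mu k}(a_1,b)$ by the proof of Lemma \ref{descoi}, and $\sigma^{\mu k}$ fixes the second coordinate. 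So $(\alpha^{\mu k}a_1,b+\beta)=\sigma^{\mu k}(a_1,b+\beta)$ lies in the same $\sigma$-orbit as $(a_1,b+\beta)$. Hence $\phi_\beta(\mathcal{Q}_b)=\mathcal{Q}_{b+\beta}$ sits inside one orbit $O_{k+(\ell-1)m}$, which consists of check positions. This commutation, $\phi_\beta$ mapping $\sigma^{\mu}$-orbits on a horizontal line to $\sigma^{\mu}$-orbits, is the crux. I expect it to be the main obstacle, because it depends on the relabeling in (\ref{orbgen}) being compatible with the $\beta$-shift. In particular, I must check that $y\mapsto y+\beta$ commutes with $\sigma^{\mu}$, which is true since $\alpha^{t\mu}=1$.

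Since $\#\mathcal{Q}_b=\nu/\mu\le\nu$, such a burst has at most $\nu$ errors. More generally, any error pattern supported in $\phi_\beta^{-1}\bigl(\bigcup_{k=1}^m O_{k+(\ell-1)m}\bigr)$ is sent by $\phi_\beta$ into check positions. That gives the "moreover" claim directly. The partial-PD-set property then follows from the definition of an $r$-PD set together with Lemma 1.1, as long as the burst size stays within the correction capability $t$.
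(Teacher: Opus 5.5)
Your argument is the paper's proof. Lemma~\ref{exigen} supplies $\beta\in B$ with $(a,b+\beta)$ in the last block $\bigcup_{k=1}^m O_{k+(\ell-1)m}$, and then $\phi_\beta(\mathcal{Q}_b)=\mathcal{Q}_{b+\beta}$ lies in one orbit of that block. Your commutation $\phi_\beta\circ\sigma^{\mu}=\sigma^{\mu}\circ\phi_\beta$ is a valid way to see the second step. The paper leaves it implicit; it also follows directly from Lemma~\ref{descoi} applied to $(a,b+\beta)$.

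You do omit the paper's Case 1. You only treat $\mathcal{Q}_b\subseteq O_1\cup\cdots\cup O_r$, but the theorem covers every $b\in\mathbb{F}_q$. When $\mathcal{Q}_b$ consists of points with a zero coordinate ($b\in B$ in the paper), so that $\mathcal{Q}_b\subseteq O_{r+1}\cup\cdots\cup O_{r+s}$, you must note that $\phi_0=\mathrm{id}$ already works, because these orbits are check positions.

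Your worry about the boundary is justified, but of your two remedies only sharpening the hypothesis works. Suppose $i=r-m$ in Proposition~\ref{infgen}(1) and $\ell\ge 2$. Then $O_{i+1}=O_{1+(\ell-1)m}$ lies in the last block. Its first point $P_{r-m+1,0}=(x_1,y_1+\beta_\ell)$ is an information position, since $g+l\ge 1$. Now take the burst on $\mathcal{Q}_{y_1}\subseteq O_1$.
\begin{itemize}
\item For $j<\ell$ we get $\phi_{\beta_j}(\mathcal{Q}_{y_1})=\mathcal{Q}_{y_1+\beta_j}\subseteq O_{1+(j-1)m}$. Since $1+(j-1)m\le (\ell-1)m=i$, this orbit consists entirely of information positions.
\item For $j=\ell$ the image contains $P_{r-m+1,0}$.
\end{itemize}
So no element of $\{\phi_\beta\}$ clears this burst. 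The inclusion $\bigcup_{k=1}^m O_{k+(\ell-1)m}\subseteq\mathcal{H}$, which the paper's proof asserts without comment, therefore fails at $i=r-m$. The statement as written is false there.

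The hypothesis must be strengthened to $i+1\le r-m$ in case (1) and $i+t-1\le r-m$ in case (2), as you suspected. With that hypothesis and the missing case added, your proof is complete.
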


\begin{proof} Assume $(a,b)\in \mathcal{Q}_b$.\\
Case 1: if $b\in B$, then $(a,b)\in \bigcup_{k=1}^{s}O_{r+k}$ and, since $\phi_{0}((a,b))=(a,b)$, we have
$$\phi_{0}(\mathcal{Q}_b)=\mathcal{Q}_b\subseteq\bigcup_{k=1}^{s}O_{r+k}\subseteq \mathcal{H},$$
where $\mathcal{H}$ is the set of points on  $\mathcal{X}^*$ associated with the check positions of $C(D,\lambda P^*)$.

Case 2: if $b\notin B$, then $(a,b)\in O_i$ for some $1\leq i\leq r$. By Lemma \ref{exigen}, $\exists \beta\in B$ such that $(a,b+\beta)\in \bigcup_{k=1}^m O_{k+(\ell-1)m}$ and, since $\phi_{\beta}((a,b))=(a,b+\beta)\in\bigcup_{k=1}^m O_{k+(\ell-1)m}$, we have
$$\phi_{\beta}(\mathcal{Q}_b)\subseteq \bigcup_{k=1}^m O_{k+(\ell-1)m}\subseteq \mathcal{H}.$$
\end{proof}

\begin{example} 
Consider the subcover of the Hermitian cuve $\mathcal{X}_{5,3}: y^5+y=x^{3}$ over $\mathbb{F}_{5^2}$ as in Example \ref{sx53}.  Observe that for all $b\in B=\{0,\theta^{3}, \theta^{9},\theta^{15},\theta^{21}\}$,
$$
\phi_{0}(\mathcal{Q}_b)=\mathcal{Q}_b\subseteq O_{6}\cup O_{7}\subseteq \mathcal{H}.
$$
Now, for $b\notin B$, note that 
\begin{align*}
    \phi_{\theta^{3}}(\mathcal{Q}_{\theta^{7}}\cup\mathcal{Q}_{\theta^{12}}\cup\mathcal{Q}_{\theta^{5}}\cup\mathcal{Q}_{\theta^{16}})&=
\mathcal{Q}_{\theta^{2}}\cup\mathcal{Q}_{\theta^{14}}\cup\mathcal{Q}_{\theta^{20}}\cup\mathcal{Q}_{\theta^{8}}=O_{5}\subseteq \mathcal{H},\\
\phi_{\theta^{9}}(\mathcal{Q}_{\theta^{13}}\cup\mathcal{Q}_{\theta^{18}}\cup\mathcal{Q}_{\theta^{11}}\cup\mathcal{Q}_{\theta^{22}})&=
\mathcal{Q}_{\theta^{8}}\cup\mathcal{Q}_{\theta^{20}}\cup\mathcal{Q}_{\theta^{2}}\cup\mathcal{Q}_{\theta^{14}}=O_{5}\subseteq \mathcal{H},\\
\phi_{\theta^{15}}(\mathcal{Q}_{\theta^{19}}\cup\mathcal{Q}_{\theta^{24}}\cup\mathcal{Q}_{\theta^{17}}\cup\mathcal{Q}_{\theta^{4}})&=
\mathcal{Q}_{\theta^{14}}\cup\mathcal{Q}_{\theta^{2}}\cup\mathcal{Q}_{\theta^{8}}\cup\mathcal{Q}_{\theta^{20}}=O_{5}\subseteq \mathcal{H},\\
\phi_{\theta^{21}}(\mathcal{Q}_{\theta}\cup\mathcal{Q}_{\theta^{6}}\cup\mathcal{Q}_{\theta^{23}}\cup\mathcal{Q}_{\theta^{10}})&=
\mathcal{Q}_{\theta^{20}}\cup\mathcal{Q}_{\theta^{8}}\cup\mathcal{Q}_{\theta^{14}}\cup\mathcal{Q}_{\theta^{2}}=O_{5}\subseteq \mathcal{H}.
\end{align*}
Therefore, $\{\phi_{0}, \phi_{\theta^{3}}, \phi_{\theta^{9}}, \phi_{\theta^{15}}, \phi_{\theta^{21}}\}$ is a partial $12$-PD set for burst errors impacting positions indexed by $\mathcal{Q}_b$ for the $b$ listed above.
 
\end{example}

The following results show that, for special SAP curves $\mathcal{X}^{**}$, permutation decoding can be used to correct burst errors in positions associated with points having the same first component.

\begin{theorem}\label{pagen}
   Let $C(D,\lambda P^*)$ be a one-point code on the special SAP curve $\mathcal{X}^{**}$ with $\lambda$ as in Proposition \ref{infgen} with $i\leq r$. Let $B=\{\beta\in\mathbb{F}_{q}: f(\beta)=0\}$ and $\ell=\#B$.
   For each $a\in \mathbb{F}_{q}$, fix $b_a$ such that $(a,b_a)\in \mathcal{X}^{**}(\mathbb{F}_{q})$. Then
   $$\left\{\phi_{-a,\ \omega_{-a}(-a) - b_a}^{\omega_{-a}}  :  a\in \mathbb{F}_{q}\right\}$$

   is a partial $\ell$-PD set of size $q$ for burst errors impacting positions indexed by $P_a$, with $a\in \mathbb{F}_{q}$.
\end{theorem}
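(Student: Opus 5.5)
The plan is to show that for each $a\in\mathbb{F}_q$ the automorphism $\psi_a:=\phi_{-a,\ \omega_{-a}(-a)-b_a}^{\omega_{-a}}$ maps the whole set $\mathcal{P}_a$ onto $\mathcal{P}_0$. Then I will check that $\mathcal{P}_0$ consists entirely of points associated with check positions. Since $\#\mathcal{P}_a=\ell$ by Lemma \ref{tal}(2), and there are $q$ values of $a$, this gives a partial $\ell$-PD set of size $q$ for bursts supported on some $\mathcal{P}_a$. First, by condition (2*)(b) applied to the point $(-a,b_{-a})$, or more precisely to the appropriate translate, each $\psi_a$ lies in $\mathrm{Aut}_{D,\lambda P^*}(\mathcal{X}^{**})$. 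Hence by (\ref{eq:aut_inclusion}) it induces a permutation automorphism of $C(D,\lambda P^*)$. I will note here that the remark after (2*) gives $(-a,b_a)\in\mathcal{X}^{**}(\mathbb{F}_q)$. The second translation parameter $\omega_{-a}(-a)-b_a$ differs from $b_a$ by an element one must check is compatible with the curve. This is where I will use (1*) and $f(\omega_{-a}(-a))=2g(-a)=2g(a)$.

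Next, the image computation. For $(a,b_a+\beta)\in\mathcal{P}_a$ with $\beta\in B$, the map sends $x\mapsto a+(-a)=0$ and
\[
y\mapsto b_a+\beta+\omega_{-a}(a)+\omega_{-a}(-a)-b_a=\beta+\omega_{-a}(a)+\omega_{-a}(-a).
\]
Using $\omega_{-a}(a)=\omega_{-a}(-(-a))=-\omega_{-a}(-a)$, which is condition (2*)(a) for the polynomial $\omega_{-a}$, this equals $\beta$. Thus $\psi_a(\mathcal{P}_a)=\{(0,\beta):\beta\in B\}=\mathcal{P}_0$; here $(0,0)$ lies on the curve by (1*), so Lemma \ref{tal}(2) identifies $\mathcal{P}_0$ exactly. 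Applying the map pointwise is fine because automorphisms preserve $\mathcal{X}^{**}(\mathbb{F}_q)$, and injectivity gives equality of sets of size $\ell$.

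Finally, I need $\mathcal{P}_0\subseteq\mathcal{H}$. Every point of $\mathcal{P}_0$ has first coordinate $0$, so it lies in one of the orbits $O_{r+1},\dots,O_{r+s}$ of points with a zero coordinate. By Proposition \ref{infgen}, with $\lambda$ in the range where $i\le r$, these last orbits consist of check positions. The main obstacle is precisely this step. Proposition \ref{infgen}(1)(b) gives $O_{i+2},\dots,O_{r+s}$ as check positions, but when $i$ is close to $r$ one must make sure the zero-coordinate orbits are not among $O_{i+1}$ or the partially filled orbits in case (2). I expect to handle this using the ordering convention that the $s$ exceptional orbits are last, together with the hypothesis on $\lambda$ (possibly $i\le r-1$, or $i\le r-t$ in case (2)), in the same way that Case 1 of Theorem \ref{qbgen} placed $\bigcup_k O_{r+k}$ inside $\mathcal{H}$. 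The secondary delicate point is confirming that the stated translation parameters satisfy (2*)(b). If needed, I will instead verify directly that the map preserves $f(y)=g(x)$ by expanding with additivity of $f$, the evenness of $g$ from (1*), and $f(\omega_{-a}(x))$ behaviour.
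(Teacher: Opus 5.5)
This is essentially the paper's proof: show $(-a,\omega_{-a}(-a)-b_a)\in\mathcal{X}^{**}(\mathbb{F}_q)$ so that (2*)(b) makes $\psi_a$ an automorphism, then compute $\psi_a(a,b_a+\beta)=(0,\beta)$ via $\omega_{-a}(a)=-\omega_{-a}(-a)$, and conclude from $\bigcup_{k=1}^s O_{r+k}\subseteq\mathcal{H}$. Your membership check via $\omega_{-a}(-a)-2b_a\in B$ and Lemma~\ref{tal}(1) is equivalent to the paper's direct computation $f(\omega_{-a}(-a)-b_a)=2g(-a)-g(a)=g(-a)$, and it avoids the paper's separate case $a=0$.

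The paper simply asserts that last inclusion, and your worry about it is justified. By Proposition~\ref{infgen}(1) with $i=r$, the first $g+l\ge 1$ points of $O_{r+1}$ are information positions, so the inclusion needs $i\le r-1$ (resp.\ $i+t\le r+1$ in case (2)); under such a hypothesis your argument is complete.
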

\begin{proof}
Let $a\in \mathbb{F}_{q}$. We claim that $(-a,\omega_{-a}(-a) - b_a)\in \mathcal{X}^{**}(\mathbb{F}_{q}).$ Indeed, if $a=0$, then $(-a,\omega_{-a}(-a) - b_a)=(0,-b_0)$, since $\omega_0(0)=0$. Thus, if $q$ is even, we have $(0,-b_0)=(0,b_0)\in \mathcal{X}^{**}(\mathbb{F}_{q})$, and if $q$ is odd, then $f(-b_0)=-f(b_0)=-g(0)=0=g(0)$. Now, if $a\neq 0$, then $f(\omega_{-a}(-a) - b_a)= f(\omega_{-a}(-a)) - f(b_a)= 2g(-a) - g(a)$. Since $\mathcal{X}^{**}$ is a special SAP curve, we have $g(a)=g(-a)$ and so $f(\omega_{-a}(-a) - b_a) = g(-a)$, and it follows that $(-a,\omega_{-a}(-a) - b_a)\in \mathcal{X}^{**}(\mathbb{F}_{q})$. 

Therefore, by definition of special SAP curve, we have
$$\phi_{-a,\ \omega_{-a}(-a) - b_a}^{\omega_{-a},}\in Aut_{D,\lambda P^*}(\mathcal{X^{**}}).$$ 

Note that, 
\begin{align*}
  \phi_{-a,\omega_{-a}(-a) - b_a}^{\omega_{-a}}(a,b_a)&= (a-a,b_a+\omega_{-a}(a)+\omega_{-a}(-a)-b_a)
  =(0,0)\in \bigcup_{k=1}^sO_{r+k} . 
\end{align*}


Moreover,  for all $\beta\in B$, 
$$\phi_{-a,\omega_{-a}(-a) - b_a}^{\omega_{-a}}(a,b_a+\beta)=(0,\beta)\in \bigcup_{k=1}^sO_{r+k}.$$

Consequently, $$\phi_{-a,\omega_{-a}(-a) - b_a}^{\omega_{-a}}(\mathcal{P}_a)\subseteq \bigcup_{k=1}^sO_{r+k}\subseteq \mathcal{H},$$
where $\mathcal{H}$ is the set of points on $\mathcal{X}^{**}$ associated with check positions of $C(D,\lambda P^*)$.   
\end{proof}

\begin{example}\label{y2q8d}
    Consider the Abdón-Torres cuve $\mathcal{Y}_{2}: y^4+y^2+y=x^{9}$ over $\mathbb{F}_{8^2}$ as in Example \ref{y2q8}. Let $C(D,267 P_\infty)$ be the one-point AG code over $\mathcal{Y}_{2}$ with dimension $k=252$ and generator matrix $G$. For convenience, we denote the $256$ points of $D$ as $P_1,\dots,P_{256}$, following the ordering induced by the orbits $O_i$ for $i=1,\dots, 30$. Assume that a codeword $c=\mathrm{ev}_D(f)\in C(D,267 P_\infty)$ is transmitted, and that the vector $y=c+e\in\mathbb{F}_{64}^{256}$ is received, where $e=\alpha_1 e_1+\alpha_2 e_{57}+\alpha_3 e_{113}+\alpha_4 e_{169}$ is the error vector. Then
\begin{align*}
    \phi_{\theta,\omega_\theta(\theta) + \theta^{15}}^{\omega_\theta}(y)=&\big(f(P_{256}),f(P_{2'}),\dots, f(P_{63'}), f(P_{253}),f(P_{65'}),\dots, f(P_{126'}), \\ 
    &f(P_{254}),f(P_{128'}),\dots, f(P_{189'}),f(P_{255}),f(P_{191'}),\dots, f(P_{252'}),
    \\
    &f(P_{64})+\alpha_2,f(P_{127})+\alpha_3,f(P_{190})+\alpha_4, f(P_{1})+\alpha_1\big),
\end{align*}

where $\omega_\theta(\theta)=\theta^{36}$ and $P_{i'}=\phi_{\theta,\omega_\theta(\theta) + \theta^{15}}^{\omega(\theta,x)}(P_i)$.
Observe that $\phi_{\theta\omega_\theta(\theta)+ \theta^{15}}^{\omega_\theta}(y)_{[252]}$ has no errors. Then $c':=\phi_{\theta,\omega_\theta(\theta) + \theta^{15}}^{\omega_\theta}(y)_{[252]} \cdot G\in C(D,267 P_\infty)$ \ and \ $c=(\phi_{\theta,\omega_\theta(\theta) + \theta^{15}}^{\omega_\theta})^{-1}(c')$. 
\end{example}

The following result synthesizes Theorems \ref{pagen} and \ref{qbgen} and thus allows the correction of burst errors at any two positions of the received vector.

\begin{theorem}\label{thm3gen}
Let $C(D,\lambda P^*)$ be a one-point code on the special SAP curve $\mathcal{X}^{**}$ with $\lambda$ as in Proposition \ref{infgen} with $i\leq r-m$.
   For each $a\in \mathbb{F}_{q}$, fix $b_a$ such that $(a,b_a)\in \mathcal{X}^{**}(\mathbb{F}_q)$. Then
   $$\left\{\phi_{-a,\omega_{-a}(-a) - b_a+\beta}^{\omega_{-a}}  :  a\in \mathbb{F}_{q} \text{ and } \beta\in B\right\}$$

   is a partial $2$-PD set of size $\# B \cdot q$.     
\end{theorem}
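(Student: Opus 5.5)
Write $\psi_{a,\beta}:=\phi_{-a,\ \omega_{-a}(-a)-b_a+\beta}^{\omega_{-a}}$ for $a\in\mathbb{F}_q$ and $\beta\in B$. The plan is to show that each $\psi_{a,\beta}$ factors as $\phi_\beta\circ\psi_{a,0}$. The statement then follows by combining the two earlier mechanisms: $\psi_{a,0}$ moves the whole fibre $\mathcal{P}_a$ to the line $x=0$ (Theorem \ref{pagen}), and a suitable $\phi_\beta$ then moves one further point into the last $m$ orbits (Theorem \ref{qbgen}). Since $\phi_\beta$ fixes the first coordinate, it preserves the line $x=0$.

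First I will check that $\psi_{a,\beta}\in \mathrm{Aut}_{D,\lambda P^*}(\mathcal{X}^{**})$. The proof of Theorem \ref{pagen} shows that $(-a,\omega_{-a}(-a)-b_a)\in\mathcal{X}^{**}(\mathbb{F}_q)$. By Lemma \ref{tal}(1), the point $(-a,\omega_{-a}(-a)-b_a+\beta)$ is then also rational for every $\beta\in B$. Condition (2*)(b) then gives $\psi_{a,\beta}\in \mathrm{Aut}_{D,\lambda P^*}(\mathcal{X}^{**})$. Directly from the formulas,
\[
\psi_{a,\beta}(x,y)=\bigl(x-a,\ y+\omega_{-a}(x)+\omega_{-a}(-a)-b_a+\beta\bigr)=\phi_\beta\bigl(\psi_{a,0}(x,y)\bigr).
\]
The condition $\omega_{-a}(a)=-\omega_{-a}(-a)$ gives $\psi_{a,\beta}(a,b_a+\beta')=(0,\beta'+\beta)$ for all $\beta'\in B$. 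Thus $\psi_{a,\beta}(\mathcal{P}_a)\subseteq\bigcup_{k=1}^s O_{r+k}$ for every $\beta\in B$. The size of the set is at most $\#B\cdot q$.

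Now let $P_1=(a_1,c_1)$ and $P_2=(a_2,c_2)$ be two positions. If $a_1=a_2$, then $\psi_{a_1,0}$ sends both points into $\bigcup_{k}O_{r+k}\subseteq\mathcal{H}$, exactly as in Theorem \ref{pagen}. If $a_1\neq a_2$, put $Q=\psi_{a_1,0}(P_2)=(a_2-a_1,c)$, which is a rational point of $\mathcal{X}^{**}$ with nonzero first coordinate. I then split into the two cases used in the proof of Theorem \ref{qbgen}.
\begin{enumerate}
\item If $Q$ lies in one of the orbits $O_{r+1},\dots,O_{r+s}$, take $\beta=0$.
\item Otherwise $Q\in O_i$ for some $i\le r$. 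Lemma \ref{exigen} provides $\beta\in B$ with $\phi_\beta(Q)\in\bigcup_{k=1}^m O_{k+(\ell-1)m}$.
\end{enumerate}
In both cases $\psi_{a_1,\beta}(P_2)=\phi_\beta(Q)\in\mathcal{H}$. Also $\psi_{a_1,\beta}(P_1)\in\{(0,\beta'+\beta)\}\subseteq\bigcup_k O_{r+k}\subseteq\mathcal{H}$, because $\phi_\beta$ does not change the first coordinate. So a single element of the set moves both positions into check positions, which is the $2$-PD property.

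The step I expect to need the most care is justifying that both $\bigcup_{k=1}^sO_{r+k}$ and $\bigcup_{k=1}^m O_{r-m+k}$ consist of check positions. This has to be read off Proposition \ref{infgen} under the hypothesis on $\lambda$ with index $\le r-m$, in the same way as in the proof of Theorem \ref{qbgen}, and I will invoke it in exactly that form. A secondary point is that the orbit-membership dichotomy for $Q$ must use the rearrangement (\ref{orbgen}). I will also note that only the points with zero coordinates fall outside the first $r$ orbits.
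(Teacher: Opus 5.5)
Your proposal is correct and follows the paper's proof. Your factorization $\psi_{a,\beta}=\phi_\beta\circ\psi_{a,0}$ followed by Lemma \ref{exigen} is exactly the paper's Case 3, and your case $a_1=a_2$ is its Case 1. The only difference is organizational: you fold the paper's Case 2 ($b=d$, handled there via Theorem \ref{qbgen} and the identification $\phi_{0,\beta}^{\omega_0}=\phi_\beta$) into the case $a_1\neq a_2$, and you explicitly treat the subcase where $\psi_{a_1,0}(P_2)$ has second coordinate $0$ (there Lemma \ref{exigen} does not apply but $\beta=0$ works), which the paper's Case 3 leaves implicit.
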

\begin{proof}
    Assume $(a,b),(c,d)\in \mathcal{X}^{**}(\mathbb{F}_{q})$.\\
    Case 1: if $a=c$, then the result follows from Theorem \ref{pagen}, choosing $\beta=0$.\\
    Case 2: if $b=d$, then the result follows from Theorem \ref{qbgen}, since taking $(a,b_a)=(0,0)$ we have $\phi_{0,\omega_{0}(0) - 0+\beta}^{\omega_0}=\phi_{0,\beta}^{\omega_0}=\phi_\beta$.\\ 
    Case 3: if $a\neq c$ and $b\neq d$. By Lemma \ref{tal}, $(a,b)=(a,b_a+\beta')$, for some $\beta'\in B$. So
\begin{align*}
    \phi_{-a,\omega_{-a}(-a) - b_a}^{\omega_{-a}}((a,b)) &=\phi_{-a,\omega_{-a}(-a) - b_a}^{\omega(-a,x)}((a,b_a+\beta'))=(0,\beta'), \\
    \phi_{-a,\omega_{-a}(-a) - b_a}^{\omega_{-a}}((c,d)) &=
    \left(c-a,d+\omega_{-a}(c)+\omega_{-a}(-a) - b_a\right).
\end{align*}

Since $c-a\neq 0$, by Lemma \ref{exigen}, there exists $\beta \in B $ such that\\ $\left(c-a,d+\omega_{-a}(c)+\omega_{-a}(-a) - b_a+\beta\right)\in \bigcup_{k=1}^m O_{k+(\ell-1)m}$. As a result, 
\begin{align*}
   \phi_{\beta}\circ \phi_{-a,\omega_{-a}(-a) - b_a}^{\omega_{-a}}((a,b)) =(0,\beta'+\beta)=(0,\beta'')\in \bigcup_{k=1}^sO_{r+k}
\end{align*}
where $\beta''=\beta'+\beta \in B$, and
\begin{align*}
    \phi_{\beta}\circ \phi_{-a,\omega_{-a}(-a) - b_a}^{\omega_{-a}}((c,d)) =
    \left(c-a,d+\omega_{-a}(c)+\omega_{-a}(-a)- b_a+\beta\right)\in \bigcup_{k=1}^m O_{k+(\ell-1)m}.
\end{align*}
Moreover,    $\phi_{\beta}\circ \phi_{-a,\omega_{-a}(-a) - b_a}^{\omega_{-a}}=\phi_{-a,\omega_{-a}(-a) - b_a+\beta}^{\omega_{-a}}$ and 
$$\phi_{-a,\omega_{-a}(-a) - b_a+\beta}^{\omega_{-a}}\left(\mathcal{P}_a \cup (c,d)\right)\subseteq  \left(\bigcup_{k=1}^m O_{k+(\ell-1)m}\right) \bigcup \left(\bigcup_{k=1}^sO_{r+k}\right)\subseteq \mathcal{H},$$
where $\mathcal{H}$ is the set of points on  $\mathcal{X}^{**}$ associated with check positions of $C(D,\lambda P^*)$.
\end{proof}

\begin{example}\label{x23f}
    Consider the curve $\mathcal{X}_{2,3}: y^2+y=x^{9}$ over $\mathbb{F}_{2^6}$ as in Example \ref{x23}. Let $C(D,67 P_\infty)$ be the one-point AG code with dimension $k=64$ and generator matrix $G$. For convenience, we denote the $128$ points of $D$ as $P_1,\dots,P_{128}$, following the ordering induced by the orbits $O_i$ for $i=1,\dots, 16$. Assume that the codeword $c=\mathrm{ev}_D(f)\in C(D,67 P_\infty)$ is transmitted and the vector $y=c+e\in\mathbb{F}_{64}^{128}$ is received, where  $e=\alpha_1 e_1+\alpha_2 e_{63}$ is an error vector. Then
\begin{align*}
    \phi_{\theta,\omega_\theta(\theta) + \theta^{54}}^{\omega_\theta}(y)=(&f(P_{127}),f(P_{2'}),\dots, f(P_{126}), f(P_{64'}),\\ 
    &f(P_{65'}),\dots, f(P_{63})+\alpha_2,f(P_{1})+\alpha_1, f(P_{128'})),
\end{align*}
where $\omega_\theta(\theta)=0$ and $P_{i'}=\phi_{\theta,\omega_\theta(\theta) + \theta^{54}}^{\omega_\theta}(P_i)$.
Observe that $\phi_{\theta,\omega_\theta(\theta) + \theta^{54}}^{\omega_\theta}(y)_{[64]}$ has no errors. Therefore, $c':=\phi_{\theta,\omega_\theta(\theta) + \theta^{54}}^{\omega_\theta}(y)_{[64]}\cdot G\in C(D,67 P_\infty)$ \ and \ $c=(\phi_{\theta,\omega_\theta(\theta) + \theta^{54}}^{\omega_\theta})^{-1}(c')$.  
\end{example}

\section{Conclusion}
In this paper, we investigated permutation decoding for one-point algebraic geometry codes arising from curves defined by separated polynomials. By exploiting automorphisms of the underlying curves, we obtained permutation automorphisms of the associated AG codes and used them to construct partial permutation decoding sets. Building on the orbit structure induced by suitable automorphisms, we described information and check positions for a family of one-point AG codes and showed how these structures can be incorporated into the permutation decoding framework.

We introduced the class of SAP curves and identified conditions under which their automorphisms yield effective partial PD-sets. For these codes, we proved that permutation decoding can be used to correct specific burst-error patterns associated with rational points sharing the same first or second coordinate. Furthermore, for the subclass of special SAP curves, we established additional decoding sets that allow the correction of burst errors in positions indexed by points with a common first coordinate, culminating in the construction of a partial 2-PD set capable of handling errors occurring at any two rational-point positions.

The results presented here extend previous work on permutation decoding of AG codes from Hermitian and norm-trace curves to a broader family of curves defined by separated polynomials. They demonstrate that the interplay between the automorphism group of a curve and the geometric structure of its rational points provides a powerful tool for the design of efficient decoding strategies.

Future research may focus on extending these techniques to AG codes arising from other families of curves with large automorphism groups, such as other maximal curves, Deligne–Lusztig curves, Suzuki and Ree curves, Giulietti–Korchmáros curves, and related Galois-covered curves. More generally, it would be interesting to identify group-theoretic conditions guaranteeing the existence of efficient PD-sets and to investigate permutation decoding for multipoint AG codes. The construction of smaller PD-sets and a detailed analysis of the computational complexity of the resulting decoding procedures also remain promising directions for further study.

\end{document}